\documentclass[11pt]{article}
\usepackage[a4paper,margin=1in]{geometry}
\usepackage[T1]{fontenc}
\usepackage[utf8]{inputenc}
\usepackage{microtype}
\usepackage[section]{placeins}
\usepackage{amsmath,amssymb,amsthm,mathtools}
\usepackage{bm}
\usepackage{physics}
\usepackage{graphicx}
\usepackage[font=small,labelfont=bf]{caption}
\usepackage{subcaption}
\usepackage{float}
\usepackage{placeins}
\usepackage{xcolor}
\usepackage[colorlinks=true,linkcolor=blue!60!black,citecolor=blue!60!black,urlcolor=blue!70!black]{hyperref}
\usepackage[numbers,sort&compress]{natbib}
\usepackage{doi}
\usepackage{booktabs}

\newtheorem{proposition}{Proposition}

\newcommand{\Cedge}{\mathcal{C}_\alpha^{(0)}}
\newcommand{\Cbulk}{\mathcal{C}_\alpha^{(\infty)}}

\graphicspath{{.}}

\title{\textbf{Entanglement in the open XX chain: R\'enyi oscillations, hard-edge crossover, and symmetry resolution}}
\author{Miguel Tierz\\[4pt]
\small Shanghai Institute for Mathematics and Interdisciplinary Sciences (SIMIS),\\
\small Block A, International Innovation Plaza, No.\ 657 Songhu Road,\\
\small Yangpu District, Shanghai, China\\[2pt]
\small \texttt{tierz@simis.cn}}
\date{}

\begin{document}
\maketitle

\begin{abstract}
We derive closed-form asymptotic formulas for the R\'enyi entanglement entropies of the open XX spin-$1/2$ chain by mapping the underlying determinant of the boundary correlation matrix (which has Toeplitz-plus-Hankel structure) to a Hankel determinant with a positive weight whose large-size asymptotics follow from known Riemann--Hilbert results. An explicit evaluation of the Szeg\H{o} function yields the leading $2k_F$ oscillatory amplitude and phase. A single variable $s = 2\ell \sin(k_F/2)$ organizes the hard-edge crossover as the Fermi momentum approaches the band edge: the oscillation envelope obeys $s^{\pm1/\alpha}$ power laws and $\ln s$ is the natural leading logarithm for a clean data collapse. For detached blocks the oscillatory amplitude is numerically consistent with a factorization through the conformal cross-ratio. The same framework recovers the open-boundary-condition (OBC) equipartition offset $-\tfrac{1}{2}\log\log\ell$ for symmetry-resolved entropies, together with the known halving of the Gaussian width relative to the periodic chain.
\end{abstract}

\section{Introduction}
\label{sec:intro}

Entanglement entropies are among the sharpest probes of critical one-dimensional quantum matter~\cite{Amico2008RMP,WolfVerstraeteHastingsCirac2008}. For a subsystem $A$ with reduced density matrix $\rho_A$, the von Neumann entropy is
$S_1=-\Tr \rho_A\ln\rho_A$, and the more general R\'enyi family
\begin{equation}
S_\alpha=\frac{1}{1-\alpha}\ln\!\big[\Tr(\rho_A^\alpha)\big],
\qquad \alpha>0,
\end{equation}
encodes the full entanglement spectrum. In a critical chain described by a conformal field theory (CFT) of central charge $c$, the leading growth for an interval of length $\ell$ is~\cite{Holzhey1994NPB,CalabreseCardy2004JSM,CalabreseCardy2009JPA}
\begin{equation}
S_\alpha(\ell)\simeq
\frac{c}{6}\Bigl(1+\frac{1}{\alpha}\Bigr)\ln\!\frac{\ell}{a}+c'_\alpha,
\end{equation}
where $a$ is a short-distance cutoff (lattice spacing), while a semi-infinite geometry with a boundary halves the logarithmic coefficient,
\begin{equation}
S_\alpha(\ell)\simeq
\frac{c}{12}\Bigl(1+\frac{1}{\alpha}\Bigr)\ln\!\frac{\ell}{a}
+\tilde c'_\alpha.
\end{equation}
For the open XX chain one has $c=1$ and boundary entropy $g=1$~\cite{AffleckLudwig1991PRL}.

The leading logarithm, however, is only part of the story. In open chains the physically interesting structure sits in the subleading terms: parity oscillations at frequency $2k_F$ whose amplitude decays as $\ell^{-1/\alpha}$, a non-universal additive constant, and the interpolation from small filling (Fermi momentum near the band edge) to large filling (interior of the band)~\cite{CalabreseCampostriniEsslerNienhuis2010PRL,Laflorencie2006PRL,AffleckLaflorencieSorensen2009JPA}. Fagotti and Calabrese~\cite{FagottiCalabrese2010JSM} derived the leading asymptotics rigorously using a Toeplitz+Hankel determinant representation (reflecting the sum of a translationally invariant bulk piece and a boundary correction in the open-chain correlation matrix) and obtained the full subleading oscillatory structure through a conjectured generalized Fisher--Hartwig form (the standard technique for extracting large-size asymptotics of structured determinants with jump-type singularities~\cite{BasorEhrhardt2001}; for its application to free-fermion counting statistics and entanglement see Refs.~\cite{AbanovIvanovQian2011,IvanovAbanovCheianov2013}; for extensions to block Toeplitz settings with broken symmetries see Refs.~\cite{AresEsteveFalceto2014,AresEsteveFalcetoDeQueiroz2015,AresEsteveFalcetoDeQueiroz2018}), with the additive constant determined numerically. The obstacle to obtaining a fully analytic subleading expansion has been that the Toeplitz+Hankel structure admits multiple Fisher--Hartwig representations, making it difficult to extract the oscillatory amplitude and phase in closed form~\cite{CalabreseCardy2010JSM,BasorEhrhardt2001,BasorEhrhardt2002}.

The route developed here offers a complementary reformulation that bypasses this ambiguity by mapping the problem to a Hankel determinant with a positive weight. (This is distinct from the block Toeplitz approach of Refs.~\cite{AresEsteveFalceto2014,AresEsteveFalcetoDeQueiroz2015,AresEsteveFalcetoDeQueiroz2018}, which generalizes the Fisher--Hartwig analysis to chains with broken symmetries---including the case of non-commuting jump discontinuities~\cite{AresEsteveFalcetoDeQueiroz2018}---while remaining in the Toeplitz framework; the present reduction instead exploits the even-symbol structure of the XX chain to pass to a scalar Hankel determinant.) Specifically, we use the Deift--Its--Krasovsky even-symbol identity~\cite{DeiftItsKrasovsky2011,garcia2020matrix} to recast the Toeplitz+Hankel determinant as a Hankel determinant on $[-1,1]$ with a single Fisher--Hartwig jump, and then use the Riemann--Hilbert (RH) steepest-descent asymptotics established for orthogonal polynomials with an internal discontinuity in Refs.~\cite{KuijlaarsRH2004,FoulquieMorenoMartinezFinkelshteinSousa2011}. This yields three main results:

\begin{enumerate}
\item \emph{Explicit oscillatory amplitude and phase.} The leading $2k_F$ oscillation in the boundary-block entropy is controlled by the Szeg\H{o} function (the exponential of a Cauchy-type integral that encodes the smooth part of the determinant asymptotics), which we evaluate in closed form for the piecewise-constant symbol of the XX chain (Eqs.~\eqref{eq:Szego-explicit}--\eqref{eq:Szego-regularized}).

\item \emph{Hard-edge crossover.} When the Fermi momentum approaches the band edge, the Fisher--Hartwig jump collides with the endpoint of the integration interval (a Jacobi-type singularity where the weight vanishes as a square root), and the standard interior asymptotics break down. The crossover is organized by a single scaling variable
\begin{equation}
s=2\ell\sin\!\frac{k_F}{2}.
\end{equation}
Using $\ln s$ rather than $\ln\ell$ as the leading logarithm absorbs the explicit filling dependence inside the logarithm and produces a clean data collapse. The oscillation envelope grows as $s^{1/\alpha}$ at the hard edge and decays as $s^{-1/\alpha}$ in the bulk, while the smooth part carries a characteristic $s^2\log s$ fingerprint in the overlap regime.

\item \emph{Symmetry-resolved entanglement (SRE).} The charged moments $\mathcal Z_\alpha(\phi)$, whose Fourier components give the sector weights, inherit the same determinant structure~\cite{GoldsteinSela2018,BonsignoriRuggieroCalabrese2019}. Their Gaussian width in the flux variable, halved relative to the periodic chain, gives the universal $-\tfrac12\log\log\ell$ equipartition offset for OBC, recovering the known results of Refs.~\cite{BonsignoriCalabrese2021,XavierAlcarazSierraPRB2018}. The extension to sector-resolved oscillations and their crossover in $s$ is discussed as a conjecture supported by partial numerics.
\end{enumerate}
Concretely, for a boundary block $A=[1,\ell]$ the entropy expansion takes the form
\begin{equation}\label{S_boundary_preview}
S_{\alpha}(\ell)=\frac{1}{12}\Bigl(1+\frac{1}{\alpha}\Bigr)\ln\ell
+ C_{\alpha}^{(\mathrm{OBC})}(k_F)
+ A_{\alpha}^{(\mathrm{OBC})}(k_F)\,\ell^{-1/\alpha}
\cos\!\Bigl(2k_F\ell+\varphi_\alpha(k_F)\Bigr)
+ O(\ell^{-\min(1,\,2/\alpha)}),
\end{equation}
where the amplitude $A_\alpha^{(\mathrm{OBC})}$ and phase $\varphi_\alpha$ are given explicitly in Eqs.~\eqref{A_phi_boundary}--\eqref{eq:Szego-regularized} below; the derivation and detailed discussion occupy Sec.~\ref{sec:asymp}.

For detached blocks at distance $\ell_0$ from the boundary, the geometry enters through the conformal cross-ratio $x=\ell^2/(2\ell_0+\ell)^2$~\cite{CalabreseCardy2004JSM}, and the oscillatory amplitude is numerically consistent with a factorization into a boundary-block piece times a geometry-dependent suppression factor $|B(x)|$ (Sec.~\ref{subsec:detached}). We also note that the $U(1)$ entanglement asymmetry (EA)~\cite{Ares_Murciano_Calabrese_NatComm_2023,Ares_Murciano_Vernier_Calabrese_SciPost_2023}---a measure of how much the reduced state breaks the charge symmetry---vanishes identically in this equilibrium setting because the reduced density matrix commutes with the subsystem charge, but the framework is naturally suited to study EA in post-quench scenarios where the symmetry is dynamically broken.

\begin{figure}[htbp]
  \centering
  \includegraphics[width=0.65\linewidth]{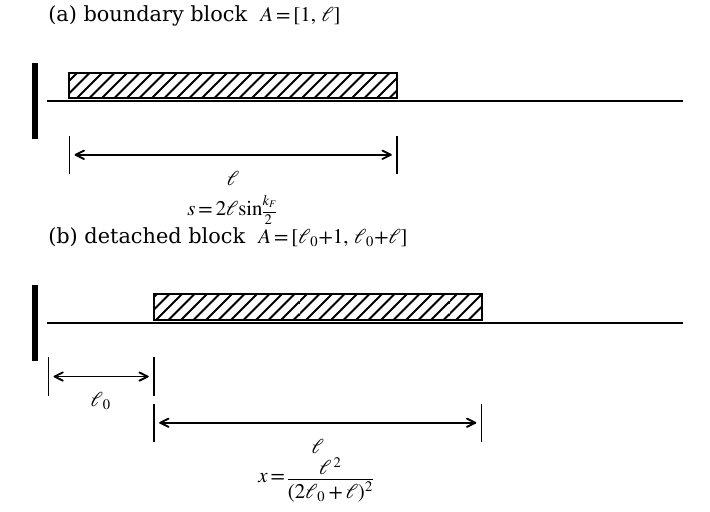}
  \caption{Interval geometries used throughout. (a) Boundary block $A=[1,\ell]$. (b) Detached block $A=[\ell_0+1,\ell_0+\ell]$ at distance $\ell_0$ from the open end. The hard-edge scaling variable is $s=2\ell\sin(k_F/2)$; the detached-block geometry enters through $x=\ell^2/(2\ell_0+\ell)^2$.}
  \label{fig:interval_geometries}
\end{figure}

The paper is organized as follows. Section~\ref{sec:det} sets up the determinant representation and the even-symbol map to a Hankel determinant with a positive weight. Section~\ref{sec:asymp} derives the boundary-block asymptotics---including the closed-form Szeg\H{o} evaluation---and the numerically observed detached-block factorization, then develops the hard-edge crossover in the scaling variable $s$. Section~\ref{sec:SRE} extends the framework to symmetry-resolved entanglement, recovers the halved Gaussian width, discusses sector-resolved oscillations as a conjectural extension, and explains the vanishing of entanglement asymmetry at equilibrium. Section~\ref{sec:conclusion} summarizes the results and outlines extensions. Table~\ref{tab:key_quantities} collects all key quantities with pointers to their definitions. Three appendices record the RH steepest-descent details (Appendix~\ref{app:rh}), the proof sketch for the charged-moment proposition (Appendix~\ref{app:proof-charged}), and additional figures for R\'enyi indices $\alpha=\tfrac12,2,3$ (Appendix~\ref{app:extra-figs}).

\section{Determinant Representation and Toeplitz+Hankel Structure}
\label{sec:det}

We consider the spin-$\frac{1}{2}$ XX chain of length $L$ with open boundary conditions and Hamiltonian $H=-\frac{1}{2}\sum_{j=1}^{L-1}(c^\dagger_j c_{j+1} + c^\dagger_{j+1} c_j)$ (written in terms of fermionic operators via the Jordan--Wigner transformation; see e.g.~\cite{PeschelEisler2009} for a review). We work in the ground state corresponding to a Fermi momentum $k_F$ (magnetization $M$), meaning that the single-particle energy levels (sine-wave modes $\psi_n(j)\propto\sin(nj\pi/(L+1))$ for the open chain) are filled up to $k_F$. The subsystem $A$ is chosen as a contiguous block of $\ell$ sites; we will primarily focus on the case where $A=[1,\ell]$ (block adjacent to the boundary at site 1), and later generalize to a block starting at site $\ell_0+1$ with $\ell_0>0$.

The R\`enyi entropy $S_{\alpha}(\ell) = \frac{1}{1-\alpha}\ln\mathrm{Tr}(\rho_A^\alpha)$ can be expressed in terms of the eigenvalues $\{\nu_j\}$ of the $\ell\times\ell$ correlation matrix $C_{ij}=\langle c_i^\dagger c_j \rangle$ restricted to $A$~\cite{PeschelEisler2009}. For free fermion ground states, $0\le \nu_j\le1$ and $\mathrm{Tr}(\rho_A^\alpha) = \prod_{j=1}^\ell [\nu_j^\alpha + (1-\nu_j)^\alpha]$. Thus one finds
\begin{equation}
 S_{\alpha}(\ell) \;=\; \frac{1}{1-\alpha}\sum_{j=1}^\ell \ln\!\Big[\nu_j^\alpha + (1-\nu_j)^\alpha\Big] \,.
\end{equation}
Equivalently, $S_{\alpha}(\ell)$ admits the contour representation (see \cite{JinKorepin2004JSP,PeschelEisler2009}) in terms of a spectral parameter $\lambda$:
\begin{equation}\label{Sn_integral}
 S_{\alpha}(\ell) \;=\; \frac{1}{2\pi i}\oint_{\mathcal C} e_{\alpha}(\lambda)\,\frac{d}{d\lambda}\ln D_{\ell}(\lambda)\,d\lambda\,,
\end{equation}
where $e_{\alpha}(\lambda) \equiv \frac{1}{1-\alpha}\ln\!\Big[ \Big(\frac{1+\lambda}{2}\Big)^{\!\alpha} + \Big(\frac{1-\lambda}{2}\Big)^{\!\alpha}\,\Big]$ is the Rényi kernel, and
\begin{equation}\label{D_det}
 D_{\ell}(\lambda) \;\equiv\; \det\!\Big[(\lambda+1)I_{\ell} - 2C\Big]\,.
\end{equation}
Here $I_{\ell}$ is the $\ell\times\ell$ identity and $C$ is the correlation matrix on $A$. The contour $\mathcal C$ in Eq.~\eqref{Sn_integral} encircles the segment $[-1,1]$ on the real $\lambda$-axis, which is the support of the spectrum of $2C - I$. The determinant $D_{\ell}(\lambda)$ is a polynomial of degree $\ell$ in $\lambda$ (since $C$ has eigenvalues $\nu_j$) with roots at $\lambda_j = 2\nu_j-1$; its zeros thus encode the single-particle entanglement spectrum.

We now make the setup of the introduction precise. In the periodic XX chain, $C$ is translationally invariant and $D_{\ell}(\lambda)$ is a Toeplitz determinant generated by the step symbol
\[
g(e^{i\theta};\lambda)=
\begin{cases}
\lambda-1,& |\theta|<k_F,\\[1mm]
\lambda+1,& |\theta|>k_F,
\end{cases}
\]
(i.e.\ $g=\lambda+1-2n(\theta)$ with $n=1$ in the occupied arc $|\theta|<k_F$), and Fisher--Hartwig methods yield the CFT logarithm together with $2k_F$ oscillatory corrections~\cite{JinKorepin2004JSP,CalabreseEssler2010JSM}. For the open chain, $C$ acquires the Toeplitz+Hankel structure mentioned in the introduction. Concretely, in the thermodynamic limit $L\to\infty$, the correlation matrix has elements
\[
 C_{jk} \;=\; \frac{\sin[k_F(j-k)]}{\pi(j-k)} \;-\; \frac{\sin[k_F(j+k)]}{\pi(j+k)}\,, \qquad 1\le j,k\le \ell.
\] 
The first term is a Toeplitz matrix (depending only on $j-k$) with the same symbol as the periodic case, while the second term is a Hankel matrix (depending only on $j+k$) whose relative minus sign comes from the product-to-sum identity for the open-chain sine modes~\cite{strang2014functions}. If $f_m(\lambda)$ denotes the Fourier coefficients of the even symbol $g(e^{i\theta};\lambda)$, then after shifting indices to $j,k=0,\dots,\ell-1$,
\[
\big[(\lambda+1)I_\ell-2C\big]_{j+1,k+1}=f_{j-k}(\lambda)-f_{j+k+2}(\lambda),
\]
so $D_\ell(\lambda)$ is exactly a Toeplitz+Hankel determinant of the $f_{j-k}-f_{j+k+2}$ type~\cite{FagottiCalabrese2010JSM,DeiftItsKrasovsky2011,BasorEhrhardt2001,BasorEhrhardt2002}.

\begin{figure}[htbp]
  \centering
  \includegraphics[width=0.65\linewidth]{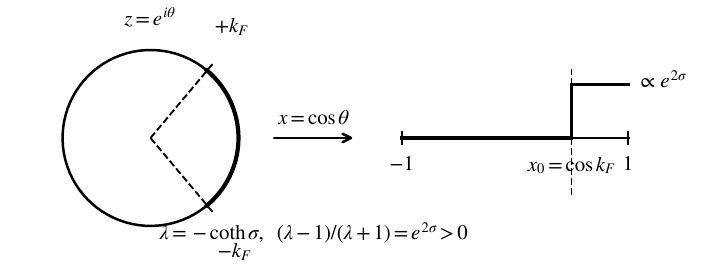}
  \caption{Deift--Its--Krasovsky even-symbol map from the unit circle to the real interval. The occupied arc $\theta\in[-k_F,k_F]$ on $z=e^{i\theta}$ maps to $x=\cos\theta\in[-1,1]$ with a single internal jump at $x_0=\cos k_F$. Under $\lambda=-\coth\sigma$ with $\sigma>0$, the jump ratio becomes $e^{2\sigma}>0$ and the weight acquires a definite sign, yielding (after an overall flip) a positive Hankel weight on $[-1,1]$.}
  \label{fig:even_symbol_map}
\end{figure}

In this work, we proceed via the Deift--Its--Krasovsky identity for even symbols, which recasts the Toeplitz+Hankel determinant as a Hankel determinant on $[-1,1]$ \cite{DeiftItsKrasovsky2011,garcia2020matrix} (the same $x=\cos\theta$ reduction from orthogonal/symplectic to Hermitian matrix models on an interval was used for the open XX chain in Ref.~\cite{PerezGarciaSantilliTierz2024}, in the context of dynamical quantum phase transitions). Recall that the Hankel determinant of a weight $w$ on $[-1,1]$ is
\[
D_\ell[w]=\det\!\Big[\int_{-1}^{1}x^{j+k}\,w(x)\,dx\Big]_{j,k=0}^{\ell-1}.
\]
The even-symbol identity then reads
\begin{equation}\label{even_symbol_identity}
 D_{\ell}(\lambda) \;=\; \frac{2^{\ell^2}}{\pi^{\ell}} \; D_{\ell}[\,w_{\lambda}\,]\,,
\end{equation}
where the prefactor $2^{\ell^2}/\pi^\ell$ is a normalization constant from the even-symbol identity~\cite{DeiftItsKrasovsky2011,garcia2020matrix}. Since this prefactor is independent of $\lambda$, it drops out of the entropy integral~\eqref{Sn_integral} (which involves $\partial_\lambda\ln D_\ell$) and plays no role in the physical results; it does, however, affect the absolute normalization of $D_\ell[w_\lambda]$ and must be tracked if one compares the Hankel determinant with independent calculations. The Hankel weight $w_\lambda$ has Jacobi-type edges at $\pm1$ (meaning it vanishes as $\sqrt{1-x^2}$ at both endpoints) and a single internal jump at $x_0=\cos k_F$. Concretely,
\[
 w_{\lambda}(x) \;=\; \sqrt{1-x^2}\;\,f(e^{i\theta(x)};\lambda)\,, \qquad x=\cos\theta,
\]
where $f(e^{i\theta};\lambda)$ is the $2\pi$-periodic symbol (even in $\theta\to-\theta$) obtained by restricting the original correlator's symbol to the interval $[-k_F,k_F]$ on the unit circle. For the XX chain's step-symbol, setting $x_0=\cos k_F$ and defining
\[
\Xi_{\eta}(x;x_0)=
\begin{cases}
1,& -1<x<x_0,\\
\eta,& x_0<x<1,
\end{cases}
\]
the symbol can be written as
\[
 f(e^{i\theta};\lambda) \;=\; (\lambda+1)\,\Xi_{c(\lambda)^2}(\cos\theta;\,x_0)\,, \qquad c(\lambda)^2 = \frac{\lambda-1}{\,\lambda+1\,}\,.
\]
Since $x=\cos\theta>x_0$ precisely on the occupied arc $|\theta|<k_F$, one reads off $f=(\lambda+1)\cdot c^2=\lambda-1$ for occupied momenta and $f=(\lambda+1)\cdot 1=\lambda+1$ for empty momenta, consistent with $g=\lambda+1-2n(\theta)$. Because the original Toeplitz+Hankel symbol is even in $\theta$, it descends to a single-valued function of $x=\cos\theta$ with one internal jump at $x_0$; it is not symmetric under $x\mapsto -x$ unless $k_F=\pi/2$. The weight $w_{\lambda}(x)$ therefore has a Fisher--Hartwig type discontinuity at $x=x_0$ within $[-1,1]$.

We have therefore transformed the problem to the analysis of a \emph{Hankel determinant with a jump discontinuity in its weight}. Such problems can be tackled by the \emph{Riemann--Hilbert approach} to orthogonal polynomials \cite{KuijlaarsRH2004}. An asymptotic analysis of orthogonal polynomials on $[-1,1]$ with \emph{one} internal jump discontinuity (and Jacobi edge behavior at $x=\pm1$) was carried out in \cite{FoulquieMorenoMartinezFinkelshteinSousa2011,KuijlaarsRH2004}, which sets up a $2\times2$ Riemann--Hilbert problem whose solution yields the orthogonal polynomials and the associated Christoffel--Darboux kernel.

The presence of a \emph{positive} weight on $[-1,1]$ is crucial for the standard steepest-descent deformations~\cite{KuijlaarsRH2004}. For arbitrary complex $\lambda$ on the original entropy contour, $w_{\lambda}(x)$ is not positive. To reach a positive-weight setting, we first rewrite the contour integral~\eqref{Sn_integral} in terms of the discontinuity of $\ln D_\ell(\lambda)$ across the branch cut $(-1,1)$, which reduces it to an integral over the outer real axis $\lambda<-1$ (plus $\lambda>1$). On this branch we perform the change of variables
\[
 \lambda = -\coth \sigma\,, \qquad \sigma>0\,,
\]
with
\[
 \frac{d\lambda}{d\sigma} = \frac{1}{\sinh^2 \sigma}\,, \qquad \frac{d\sigma}{d\lambda} = \sinh^2 \sigma = \frac{1}{\,\lambda^2-1\,}\,.
\]
Under this reparametrization, the jump magnitude in the weight becomes $c(\lambda(\sigma))^2 = e^{2\sigma}>0$. The resulting weight $w_{\sigma}(x)$ has a definite sign on $[-1,1]$; after the overall sign flip that makes it positive, one works with
\[
 w_{\sigma}(x) \;=\; -\sqrt{1-x^2}\;(\lambda(\sigma)+1)\,\Xi_{e^{2\sigma}}\!(x; x_0)\,,
\]
where $x_0 = \cos k_F$ and the overall sign makes the weight positive (since $\lambda(\sigma)+1<0$ for $\sigma>0$). The convention-dependent numerical prefactor is absorbed into the even-symbol identity~\eqref{even_symbol_identity}.

\paragraph{Parameter identification.}
To connect with the general jump-weight analysis of Refs.~\cite{FoulquieMorenoMartinezFinkelshteinSousa2011,KuijlaarsRH2004}: the Jacobi edge exponents are both $1/2$ (from the factor $\sqrt{1-x^2}$), the analytic prefactor is $h_\sigma(x)\equiv -(\lambda(\sigma)+1)>0$ (a positive constant for $\sigma>0$), the jump is located at $x_0=\cos k_F\in(-1,1)$, and the jump ratio is $e^{2\sigma}>0$. The steepest-descent analysis then proceeds exactly as in those references, with local Bessel parametrices at $x=\pm1$ and a confluent-hypergeometric parametrix at $x_0$.

\paragraph{From the Hankel determinant to the entropy.}
The Hankel determinant $D_\ell[w_\sigma]$ is the partition function of a random-matrix ensemble of Jacobi type with an internal Fisher--Hartwig jump~\cite{KuijlaarsRH2004,DeiftItsKrasovsky2011}, so its large-$\ell$ asymptotics are controlled by the Riemann--Hilbert steepest-descent analysis of the associated orthogonal polynomials~\cite{KuijlaarsRH2004,FoulquieMorenoMartinezFinkelshteinSousa2011}. For fixed $\sigma>0$, this analysis (Appendix~\ref{app:rh}) gives an expansion of the form
\begin{equation}\label{eq:Hankel-expansion}
\ln D_\ell[w_\sigma]
=
\ell\,V_{\mathrm{eff}}(\sigma;k_F)
+\eta(\sigma;k_F)\log\ell
+\Phi(\sigma;k_F)
+\sum_{m\ge1}\frac{c_m(\sigma;k_F)}{\ell^m},
\end{equation}
where $V_{\mathrm{eff}}$ is the equilibrium energy of the associated log-gas~\cite{KuijlaarsRH2004}, $\eta(\sigma;k_F)\log\ell$ combines a $\sigma$-independent Jacobi endpoint contribution ($-1/2$) with a $\sigma$-dependent Fisher--Hartwig piece controlled by the jump index $\beta(\sigma)=(2\pi i)^{-1}\log e^{2\sigma}=\sigma/(\pi i)$~\cite{KuijlaarsRH2004,FoulquieMorenoMartinezFinkelshteinSousa2011}, $\Phi$ is a constant determined by the global parametrix, and the $c_m$ are the oscillatory correction coefficients from the internal jump parametrix~\cite{FoulquieMorenoMartinezFinkelshteinSousa2011}.

The entropy is recovered by substituting this expansion into the contour integral~\eqref{Sn_integral}. Since the even-symbol prefactor in Eq.~\eqref{even_symbol_identity} is $\lambda$-independent, it drops out of $\partial_\lambda\ln D_\ell$ and does not contribute to the entropy~\cite{DeiftItsKrasovsky2011}. The remaining $\lambda$-dependence enters through $V_{\mathrm{eff}}$, the Fisher--Hartwig index $\beta(\lambda)=(2\pi i)^{-1}\log[(\lambda-1)/(\lambda+1)]$ in the log-$\ell$ coefficient, and the oscillatory corrections $c_m$. After deforming the contour to the branch cut and changing variables to $\sigma$, the $\lambda$-integration of $e_\alpha(\lambda)\,\partial_\lambda\ln D_\ell[w_\lambda]$ against the RH expansion produces the entropy formula~\eqref{S_boundary}: the universal boundary coefficient $\tfrac{1}{12}(1+1/\alpha)$ arises from integrating the $\lambda$-dependent part of the $\log\ell$ coefficient against the R\'enyi kernel~\cite{CalabreseCardy2004JSM,CalabreseCardy2009JPA,JinKorepin2004JSP}, the non-universal constant $C_\alpha^{(\mathrm{OBC})}(k_F)$ comes from $V_{\mathrm{eff}}$ and $\Phi$, and the $m=1$ oscillatory coefficient yields the $\ell^{-1/\alpha}\cos(2k_F\ell+\varphi_\alpha)$ correction with amplitude and phase controlled by the Szeg\H{o} function $D(e^{ik_F})$ (Eq.~\eqref{A_phi_boundary}).

The key point is that terms of different $\ell$-order do not mix under the $\lambda$-integration, so the RH expansion~\eqref{eq:Hankel-expansion} translates directly into the entropy expansion~\eqref{S_boundary}.

\section{Asymptotic Results for Entanglement Entropy}
\label{sec:asymp}

This section collects the asymptotic formulas for neutral entanglement in the open XX chain. Roman amplitudes such as $A_\alpha^{(\mathrm{OBC})}(k_F)$ refer to fixed-geometry coefficients in the large-$\ell$ expansion, while calligraphic quantities such as $\mathcal A_\alpha(s)$ denote the crossover envelope after the data are organized in the scaling variable $s$. Table~\ref{tab:key_quantities} provides a compact reference for all key quantities and their definitions.

\begin{table}[t]
\centering
\small
\caption{Key quantities used in the hard-edge crossover and symmetry-resolved analysis.}
\begin{tabular}{@{}p{0.22\linewidth}p{0.24\linewidth}p{0.44\linewidth}@{}}
\toprule
Quantity & Meaning & Where defined / scaling \\
\midrule
$s=2\ell\sin(k_F/2)$ & Left-edge scaling variable & Eq.~\eqref{scaling_form}; $s\to0$ is left band edge; right edge uses $\tilde s=2\ell\cos(k_F/2)$ \\
$x=\ell^2/(2\ell_0+\ell)^2$ & Detached-block cross-ratio & Eq.~\eqref{S_detached}; $x=1$ is boundary, $x\to0$ is deep bulk \\
\midrule
$\mathcal F_\alpha(s)$ & Smooth backbone & Eq.~\eqref{scaling_form}; after subtracting the leading $\ln s$ term, the overlap regime shows the $s^2\log s$ fingerprint (Appendix~\ref{app:rh}) \\
$\mathcal A_\alpha(s)$ & Physical oscillation envelope & Eqs.~\eqref{scaling_form}, \eqref{eq:env-exponents}; $\mathcal A_\alpha\sim s^{\pm1/\alpha}$ \\
$\Phi_\alpha(s)$ & Oscillation phase & Eq.~\eqref{eq:phase-evolution}; numerical parametrization, convention-dependent at small $s$ \\
$A_\alpha^{(\mathrm{OBC})}(k_F)$ & Boundary oscillation amplitude & Eq.~\eqref{A_phi_boundary}; explicit via Szeg\H{o} function \\
$D_{\mathrm{reg}}(e^{ik_F};\lambda)$ & Regularized Szeg\H{o} value & Eqs.~\eqref{eq:Szego-explicit}--\eqref{eq:Szego-regularized}; closed-form evaluation \\
$C_\alpha^{(\mathrm{OBC})}(k_F)$ & Boundary constant & Eq.~\eqref{S_boundary}; determined by global parametrix \\
$B(x)$ & Detached-block geometry factor & Eq.~\eqref{A_phase_detached}; numerically supported leading approx. \\
$c_\alpha(k_F;x)$ & Detached-block constant & Eq.~\eqref{S_detached}; absorbs BCFT four-point factor \\
\midrule
$a_\alpha=\frac{K}{4\pi^2\alpha}$ & Small-$\phi$ quadratic coefficient & Eq.~\eqref{eq:smallphi}; controls the Gaussian variance and is halved relative to PBC \\
$\mathcal Z_\alpha(q)$ & Sector weight & Eq.~\eqref{eq:Znq-Gaussian}; Gaussian in $(q-\bar q_\alpha)$ \\
$\mathcal A_\alpha^{(q)}(s)$ & Sector-resolved envelope & Eq.~\eqref{eq:SRE-amp-phase}; proposed (see Sec.~\ref{subsec:SRE-oscillations}) \\
\bottomrule
\end{tabular}
\label{tab:key_quantities}
\end{table}

\subsection{Boundary block}
\label{subsec:boundary}

For a block $A=[1,\ell]$ adjacent to the boundary of a semi-infinite chain, combining the contour formula~\eqref{Sn_integral}, the even-symbol map~\eqref{even_symbol_identity}, and the Riemann--Hilbert asymptotics of the resulting Hankel determinant yields the boundary entropy expansion (cf.\ Ref.~\cite{FagottiCalabrese2010JSM} for the structure; the explicit amplitude and phase below are new)
\begin{equation}\label{S_boundary}
S_{\alpha}(\ell)=\frac{1}{12}\Bigl(1+\frac{1}{\alpha}\Bigr)\ln\ell
+ C_{\alpha}^{(\mathrm{OBC})}(k_F)
+ A_{\alpha}^{(\mathrm{OBC})}(k_F)\,\ell^{-1/\alpha}
\cos\!\Bigl(2k_F\ell+\varphi_\alpha(k_F)\Bigr)
+ O(\ell^{-\min(1,\,2/\alpha)}) .
\end{equation}
The logarithmic coefficient is the expected boundary-CFT result~\cite{CalabreseCardy2004JSM,CalabreseCardy2009JPA}. The $\ell^{-1/\alpha}\cos(2k_F\ell+\cdots)$ term is the leading \emph{oscillatory} correction. The remainder $O(\ell^{-\min(1,\,2/\alpha)})$ includes both the next oscillatory harmonic (at order $\ell^{-2/\alpha}$) and a non-oscillatory analytic correction at order $O(\ell^{-1})$; for $\alpha\le1$ the analytic background is of comparable or larger size than the leading oscillation itself, but it is non-oscillatory and cannot mimic the $2k_F$ parity effect. The semi-infinite OBC form and the leading $\ell^{-1/\alpha}$ oscillation were established in Ref.~\cite{FagottiCalabrese2010JSM} (building on the periodic-chain parity oscillation analysis of Ref.~\cite{CalabreseCampostriniEsslerNienhuis2010PRL}); what the present RH analysis adds is the explicit closed-form amplitude and phase in Eq.~\eqref{A_phi_boundary} below, the regularized Szeg\H{o} evaluation, and the bridge to the hard-edge scaling of Sec.~\ref{subsec:edge-bulk}. We use $C_{\alpha}^{(\mathrm{OBC})}(k_F)$ for the RH constant that provides the analytical counterpart of the numerically determined offset of Ref.~\cite{FagottiCalabrese2010JSM}; it is determined by the global parametrix but its explicit reduction is not needed for the scaling analysis below.

The leading oscillatory amplitude and phase are
\begin{equation}\label{A_phi_boundary}
A_{\alpha}^{(\mathrm{OBC})}(k_F)
=G_\alpha\,(2\sin k_F)^{-1/\alpha}\,\abs{D(e^{ik_F})}^{-2/\alpha},
\qquad
\varphi_\alpha(k_F)=\frac{\pi}{2\alpha}-\frac{1}{\alpha}\arg D(e^{ik_F}),
\end{equation}
where $|D(e^{ik_F})|^{-2/\alpha}$ and $\arg D(e^{ik_F})$ are the physical amplitude and phase obtained after $\sigma$-integration of the $m=1$ RH coefficient (as described in Sec.~\ref{sec:det}); they depend only on $(\alpha,k_F)$. At the determinant level, the Szeg\H{o} function~\cite{JinKorepin2004JSP,DeiftItsKrasovsky2011} carries the spectral parameter $\lambda$:
\begin{equation}\label{eq:Szego-def}
D(z;\lambda)=\exp\!\left\{\frac{1}{4\pi}\int_{-\pi}^{\pi}
\frac{e^{i\theta}+z}{e^{i\theta}-z}\,\ln f(e^{i\theta};\lambda)\,d\theta\right\}.
\end{equation}
Because $z=e^{ik_F}$ lies exactly at the Fisher--Hartwig jump, $D(e^{ik_F})$ in Eq.~\eqref{A_phi_boundary} is understood as the regularized nontangential boundary value of $D(z;\lambda)$ after removing the local singular factor. The remaining constant is the standard Fisher--Hartwig prefactor~\cite{BasorEhrhardt2001,DeiftItsKrasovsky2011}
\begin{equation}\label{eq:Galpha-def}
G_\alpha=\frac{G\!\left(1+\tfrac{1}{2\alpha}\right)^2}{G\!\left(1+\tfrac{1}{\alpha}\right)}.
\end{equation}

\paragraph{Explicit evaluation for the XX step symbol.}
Reinstating the $\lambda$-dependence temporarily, set
\[
r(\lambda)=\frac{\lambda-1}{\lambda+1},
\qquad
\beta(\lambda)=\frac{1}{2\pi i}\log r(\lambda),
\]
with principal branches. Since
\[
\log f(e^{i\theta};\lambda)=\log(\lambda+1)+\log r(\lambda)\,\chi_{(-k_F,k_F)}(\theta),
\]
the Cauchy integral in Eq.~\eqref{eq:Szego-def} is elementary and gives, for $|z|<1$,
\begin{equation}\label{eq:Szego-explicit}
D(z;\lambda)
=(\lambda+1)^{1/2}\,
r(\lambda)^{k_F/(2\pi)}
\left(\frac{1-ze^{-ik_F}}{1-ze^{ik_F}}\right)^{\beta(\lambda)}.
\end{equation}
Taking the nontangential interior limit to the jump and stripping off the local Fisher--Hartwig factor produces the finite regular part
\begin{equation}\label{eq:Szego-regularized}
D_{\mathrm{reg}}(e^{ik_F};\lambda)
\equiv
\lim_{\substack{z\to e^{ik_F}\\ |z|<1}}
D(z;\lambda)\,(1-ze^{-ik_F})^{-\beta(\lambda)}
=
(\lambda+1)^{1/2}\,e^{\,i\pi\beta(\lambda)/2}\,(2\sin k_F)^{-\beta(\lambda)}.
\end{equation}
Equivalently,
\[
\log D_{\mathrm{reg}}(e^{ik_F};\lambda)
=
\frac{1}{2}\log(\lambda+1)
+
\frac{1}{4}\log\!\left(\frac{\lambda-1}{\lambda+1}\right)
-
\frac{1}{2\pi i}\log\!\left(\frac{\lambda-1}{\lambda+1}\right)\log(2\sin k_F),
\]
again with the principal branches.
The formula makes the full $k_F$-dependence of the Szeg\H{o} factor explicit, so the amplitude and phase in Eq.~\eqref{A_phi_boundary} can be evaluated directly without leaving $D$ in implicit integral form. To clarify the notation: $D_{\mathrm{reg}}(e^{ik_F};\lambda)$ in Eqs.~\eqref{eq:Szego-explicit}--\eqref{eq:Szego-regularized} carries the spectral parameter~$\lambda$ and enters the determinant-level RH expansion. The $\lambda$-free quantities $|D(e^{ik_F})|^{-2/\alpha}$ and $\arg D(e^{ik_F})$ appearing in the physical amplitude and phase (Eq.~\eqref{A_phi_boundary}) are obtained after the $\sigma$-integration described in Sec.~\ref{sec:det}, which integrates $D_{\mathrm{reg}}$ against the R\'enyi kernel and produces functions of $(\alpha,k_F)$ alone.

\paragraph{Remark (endpoint phase).}
The term $\pi/(2\alpha)$ in $\varphi_\alpha(k_F)$ comes from the endpoint ($x=-1$) Bessel parametrix~\cite{KuijlaarsRH2004}, whereas the $-(1/\alpha)\arg D(e^{ik_F})$ piece is the familiar Fisher--Hartwig phase from the interior jump~\cite{DeiftItsKrasovsky2011,FoulquieMorenoMartinezFinkelshteinSousa2011}.

\paragraph{Half-filling check ($k_F=\pi/2$).}
At half-filling $\sin k_F=1$ and the crossover variable is $s=\ell\sqrt{2}$. Equation~\eqref{eq:Szego-regularized} simplifies to
\[
D_{\mathrm{reg}}(e^{i\pi/2};\lambda)
=(\lambda+1)^{1/2}\,e^{\,i\pi\beta(\lambda)/2}\,2^{-\beta(\lambda)},
\]
since $(2\sin(\pi/2))^{-\beta}=2^{-\beta}$. Hence the only remaining $k_F$-dependence in the Szeg\H{o} factor is the explicit power of $2$, providing a clean benchmark for numerical checks at half filling.

\paragraph{Connection to the scaling form.}
The boundary formula~\eqref{S_boundary} is a fixed-$k_F$, large-$\ell$ expansion. The hard-edge scaling form in Sec.~\ref{subsec:edge-bulk} below uses $\ln s$ rather than $\ln\ell$, with $s=2\ell\sin(k_F/2)$. The two are matched by the identity $\ln s=\ln\ell+\ln[2\sin(k_F/2)]$. In the overlap/fixed-$k_F$ regime one has $s\to\infty$ and the smooth backbone tends to a constant,
\[
\mathcal F_\alpha(s)=\mathcal F_\alpha^{(\infty)}+O(s^{-2}),
\]
so the fixed-$k_F$ offset is
\[
C_\alpha^{(\mathrm{OBC})}(k_F)=\frac{1}{12}\Bigl(1+\frac{1}{\alpha}\Bigr)\ln[2\sin(k_F/2)]+\mathcal F_\alpha^{(\infty)}+o(1).
\]
Note also that the amplitude~\eqref{A_phi_boundary} involves $(2\sin k_F)^{-1/\alpha}$, while the scaling variable uses $\sin(k_F/2)$; these are related by $\sin k_F=2\sin(k_F/2)\cos(k_F/2)$, and the factor $\cos(k_F/2)\to1$ in the double-scaling limit $k_F\to0$.

\subsection{Detached block and geometry factor}
\label{subsec:detached}

For a block $A=[\ell_0+1,\ell_0+\ell]$ at distance $\ell_0$ from the boundary, the asymptotic dependence on $\ell$ and $\ell_0$ is organized by the cross-ratio $x=\ell^2/(2\ell_0+\ell)^2$~\cite{CalabreseCardy2004JSM,FagottiCalabrese2010JSM}. In the scaling limit of large $\ell,\ell_0$ at fixed $x\in(0,1)$, the RH analysis of the Hankel determinant gives the entropy decomposition
\begin{equation}\label{S_detached}
S_{\alpha}(\ell,\ell_0)=
\frac{1}{6}\Bigl(1+\frac{1}{\alpha}\Bigr)\ln\ell
+ c_{\alpha}(k_F;x)
+ A_{\alpha}(k_F;x)\,\ell^{-1/\alpha}
\cos\!\Bigl(2k_F\ell+\varphi_\alpha(k_F;x)\Bigr)
+ O(\ell^{-q_\alpha}),
\end{equation}
with $q_\alpha=\min(2,\,2/\alpha)$ (the lesser of the next smooth correction and the next oscillatory harmonic), where $c_\alpha(k_F;x)$ absorbs both the standard boundary-CFT (BCFT) four-point factor and all $x$-dependent subleading structure. The leading $\frac16(1+1/\alpha)\ln\ell$ is the bulk (periodic-chain) coefficient~\cite{CalabreseCardy2004JSM}; as $x\to1$ (boundary block), the crossover to the boundary coefficient $\frac{1}{12}(1+1/\alpha)\ln\ell$ is carried by a logarithmic piece inside $c_\alpha(k_F;x)$, which is why the $x\to1$ limit requires the separate treatment of Sec.~\ref{subsec:boundary}.

\begin{figure}[htbp]
  \centering
  \includegraphics[width=0.58\linewidth, trim=0 0 0 30, clip]{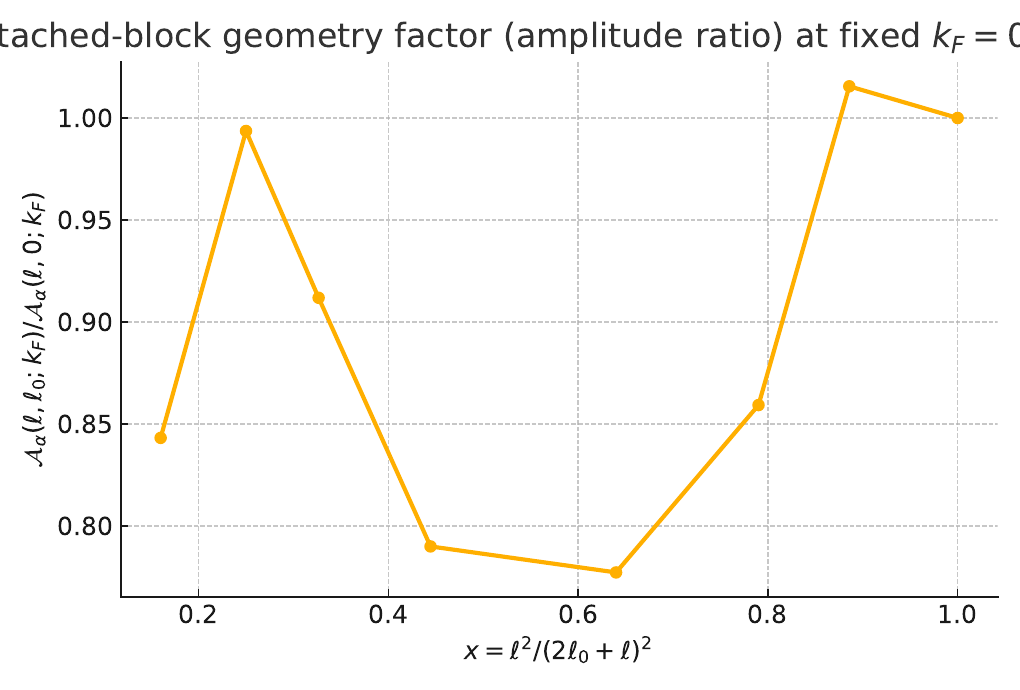}
  \caption{Detached block: ratio of the fixed-$k_F$ oscillation amplitude $A_\alpha(k_F;x)/A_\alpha^{(\mathrm{OBC})}(k_F)$ versus $x=\ell^2/(2\ell_0+\ell)^2$ for $\alpha=1$, extracted as a median over a midrange of fillings.}
  \label{fig:geometry_factor_a1}
\end{figure}

At fixed $k_F$ in the interior of the band, the oscillatory coefficient is numerically consistent with the factorized form
\begin{equation}\label{A_phase_detached}
A_{\alpha}(k_F;x)\approx A_{\alpha}^{(\mathrm{OBC})}(k_F)\,\abs{B(x)},
\qquad
\varphi_\alpha(k_F;x)\approx\varphi_\alpha(k_F)+\arg B(x),
\end{equation}
where $B(1)=1$ and $B(x)\to0$ as $x\to0$. Figure~\ref{fig:geometry_factor_a1} shows the amplitude ratio for $\alpha=1$, which traces out a single curve in $x$ independently of the filling used; cross-$\alpha$ checks give consistent results but are not shown. The factorization~\eqref{A_phase_detached} should therefore be understood as a well-supported leading approximation rather than an exact identity. Note that $B(x)\to0$ describes the decay of the $\ell^{-1/\alpha}$ boundary term only; for a block deep in the bulk, the periodic-chain oscillation (with the bulk exponent $\ell^{-2/\alpha}$~\cite{CalabreseCampostriniEsslerNienhuis2010PRL,CalabreseEssler2010JSM}) takes over.

\subsection{Hard-edge crossover}
\label{subsec:edge-bulk}

The variable $s$ measures how close the Fermi momentum is to the \emph{left} band edge ($k_F=0$), which in the RH problem corresponds to the Fisher--Hartwig jump approaching the Jacobi endpoint at $x=1$~\cite{KuijlaarsRH2004,FoulquieMorenoMartinezFinkelshteinSousa2011}. Near the right band edge ($k_F=\pi$), the corresponding variable is $\tilde s=2\ell\cos(k_F/2)=2\ell\sin((\pi-k_F)/2)$, obtained by the replacement $k_F\mapsto\pi-k_F$; the same crossover functions apply by the left--right symmetry of the Jacobi weight (Fig.~\ref{fig:two_edges_univ_alpha1} below is consistent with this numerically). Throughout we present the analysis for the left edge; the right-edge formulas follow by the substitution $s\to\tilde s$. This is distinct from the detached-block crossover (Sec.~\ref{subsec:detached}), which measures how far the interval sits from the physical boundary. To avoid confusion we refer to the $s$-crossover as the \emph{hard-edge crossover} and reserve ``boundary suppression'' for the $B(x)$ effect.

The one-parameter crossover is organized by
\[
s=2\ell\sin\!\frac{k_F}{2}.
\]
Since $s=\ell\cdot 2\sin(k_F/2)$, the natural leading logarithm is $\ln s$ rather than the $\ln\ell$ used in the fixed-$k_F$ literature~\cite{FagottiCalabrese2010JSM,CalabreseEssler2010JSM}: physically, $s$ is the block length measured in units of the Fermi wavelength, which is the effective lattice cutoff seen by the entanglement. (An analogous double-scaling regime arises in the periodic-chain Toeplitz problem~\cite{IvanovAbanovCheianov2013}.) In the overlap part of the double-scaling regime, $k_F\to0$, $\ell\to\infty$ with $1\ll s\ll \ell$ (so that both the RH large-$\ell$ expansion and the hard-edge parametrix are valid), the entropy takes the matched asymptotic form
\begin{equation}\label{scaling_form}
S_\alpha(\ell,k_F)=
\frac{1}{12}\Bigl(1+\frac{1}{\alpha}\Bigr)\ln s
+\mathcal F_\alpha(s)
+\mathcal A_\alpha(s)\cos\!\Bigl(2k_F\ell+\Phi_\alpha(s)\Bigr)
+o\!\big(\mathcal A_\alpha(s)\big),
\end{equation}
where $\mathcal F_\alpha(s)$ is the smooth backbone, $\mathcal A_\alpha(s)$ is the physical oscillation amplitude (not divided by $\ell^{-1/\alpha}$), and $\Phi_\alpha(s)$ is the phase. In the overlap regime these quantities are governed by $s$ alone; at moderate $k_F$, slowly varying corrections proportional to $\cos(k_F/2)$ and to the regularized Szeg\H{o} factor remain.

The envelope follows universal edge and bulk power laws (the exponents $\pm1/\alpha$ arise from the Fisher--Hartwig index $\beta$ at the internal jump; see Appendix~\ref{app:rh} and Refs.~\cite{FoulquieMorenoMartinezFinkelshteinSousa2011,CalabreseCampostriniEsslerNienhuis2010PRL}),
\begin{equation}\label{eq:env-exponents}
\mathcal A_\alpha(s)\sim \Cedge\,s^{1/\alpha}
\qquad (s\downarrow0),
\qquad
\mathcal A_\alpha(s)\sim \Cbulk\,s^{-1/\alpha}
\qquad (s\to\infty),
\end{equation}
so the oscillation grows from zero at the hard edge, peaks at $s=O(1)$, and decays algebraically in the bulk. In the overlap regime $1\ll s\ll \ell$, the demodulated smooth backbone is numerically consistent with
\begin{equation}\label{eq:smooth-collapse}
\mathcal F^{\mathrm{num}}_\alpha(\ell,k_F)=\mathcal F_\alpha(s)+O(s^{-2}),
\end{equation}
where $\mathcal F^{\mathrm{num}}_\alpha$ denotes the residual obtained after subtracting the leading logarithm and removing the leading oscillation. We stress that Eq.~\eqref{scaling_form} is a matched asymptotic valid when $\ell$ is large and $s$ is the natural variable; it is not a uniform formula extending to $s=0$ (where the physical entropy vanishes). The $s\to0$ extrapolation of $\mathcal F_\alpha(s)$ and $\mathcal A_\alpha(s)$ describes the approach to the Jacobi hard edge in the RH problem, and the small-$s$ power laws and $s^2\log s$ fingerprint are signatures of this RH regime.
Note that using $\ln s$ rather than $\ln\ell$ in Eq.~\eqref{scaling_form} is essential for the smooth-part collapse: at fixed $s$ with different $(\ell,k_F)$ pairs, subtracting $\ln s$ reduces the residual spread by a factor of $\sim200$ compared with subtracting $\ln\ell$, because the latter leaves an uncancelled $\ln\!\sin(k_F/2)$ piece inside~$\mathcal F_\alpha$.
The backbone and envelope are extracted by local cosine/sine regression: at each evaluation point $k_0$ in the edge window, we fit the subtracted entropy to $y(k)\approx a_0+a_1 u+a_2 u^2+[b_0+b_1 u]\cos(2\ell k)+[c_0+c_1 u]\sin(2\ell k)$ with $u=k-k_0$ and Gaussian weights of width $\sigma_k\approx0.85\,\pi/\ell$ (slightly less than one carrier period). The backbone is $\mathcal F_\alpha(s_0)\approx a_0$ and the envelope is $\mathcal A_\alpha(s_0)\approx\sqrt{b_0^2+c_0^2}$. The $\alpha=1$ backbone collapse is shown on a linear scale in Fig.~\ref{fig:edge_collapse_smooth_a1} and the corresponding demodulated envelope on log--log axes in Fig.~\ref{fig:edge_envelope_a1}; additional plots for $\alpha=\tfrac12,2,3$ are collected in Appendix~\ref{app:extra-figs}.

\begin{figure}[htbp]
  \centering
  \includegraphics[width=0.52\linewidth]{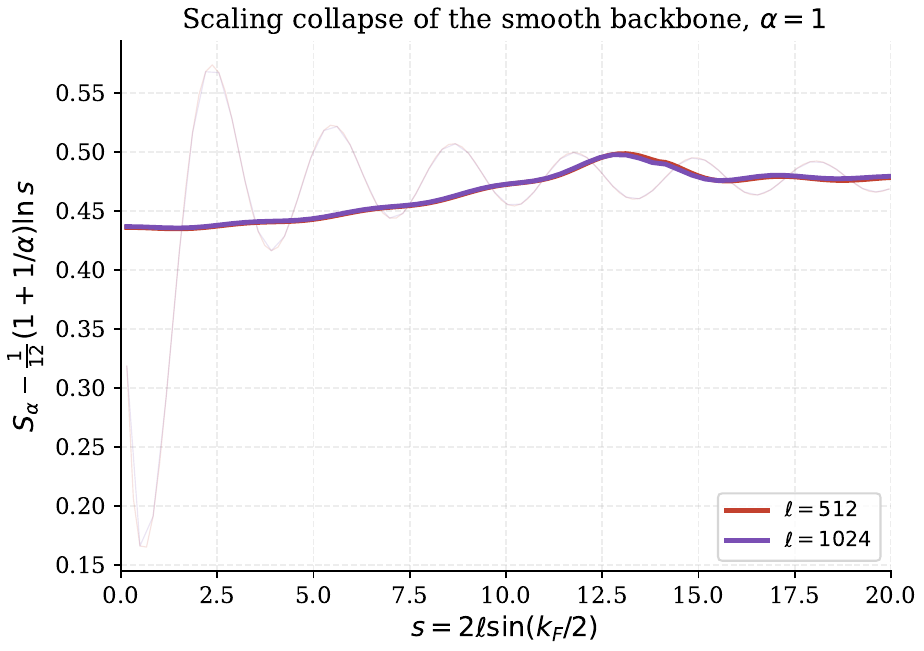}
  \caption{Scaling collapse of the smooth backbone $\mathcal F_\alpha(s)$ for $\alpha=1$, boundary block $A=[1,\ell]$. We plot $\Delta S_\alpha \equiv S_\alpha-\tfrac{1}{12}(1+1/\alpha)\ln s$ against $s=2\ell\sin(k_F/2)$ on a linear scale. Thin transparent lines: raw numerical data (2$k_F$ oscillations visible, largest at small $s$). Thick lines: demodulated backbone obtained by Savitzky--Golay filtering. The $\ell=512$ and $1024$ backbones collapse onto a common curve, supporting the matched-asymptotic description~\eqref{scaling_form}.}
  \label{fig:edge_collapse_smooth_a1}
\end{figure}

\begin{figure}[htbp]
  \centering
  \includegraphics[width=0.52\linewidth]{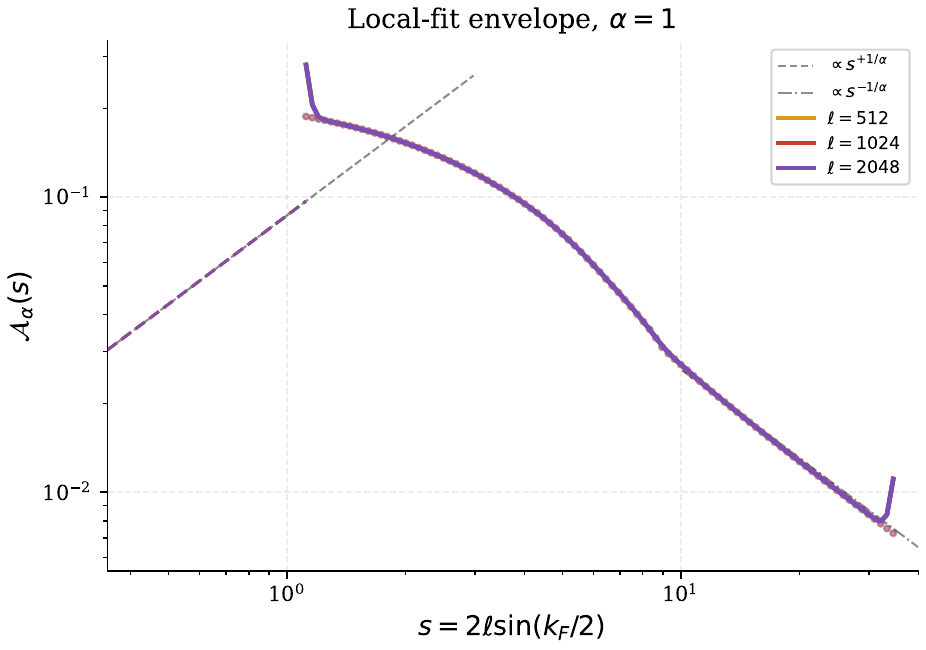}
  \caption{Oscillation envelope $\mathcal A_\alpha(s)$ for $\alpha=1$, extracted by local cosine/sine demodulation. Dots: raw local-fit amplitude at each evaluation point; solid curves: lightly smoothed result. Dashed guide lines show the predicted $s^{\pm1/\alpha}$ power laws. Three system sizes $\ell=512$, $1024$, $2048$ are indistinguishable, confirming the envelope power laws~\eqref{eq:env-exponents} (see also Table~\ref{tab:envelope_constants}).}
  \label{fig:edge_envelope_a1}
\end{figure}

The phase interpolates smoothly between its hard-edge and interior limits. As a numerical parametrization,
\begin{equation}\label{eq:phase-evolution}
\Phi_\alpha(s)=\Phi_\alpha^{\mathrm{edge}}
+\bigl(\Phi_\alpha^{\mathrm{bulk}}-\Phi_\alpha^{\mathrm{edge}}\bigr)\,g_\alpha(s),
\qquad
g_\alpha(s)\to
\begin{cases}
0,& s\downarrow0,\\
1,& s\to\infty.
\end{cases}
\end{equation}
The asymptotic limits $\Phi_\alpha^{\mathrm{edge}}$ and $\Phi_\alpha^{\mathrm{bulk}}$ are fixed by the RH analysis, but the interpolation function $g_\alpha(s)$ is convention-dependent: because $\mathcal A_\alpha(s)\to0$ as $s\to0$, the phase is only well-defined where the envelope is numerically resolved.
Figure~\ref{fig:two_edges_univ_alpha1} is consistent with the left--right edge universality obtained from $k_F\mapsto\pi-k_F$.

\begin{figure}[htbp]
  \centering
  \begin{subfigure}{0.50\linewidth}
    \centering
    \includegraphics[width=\linewidth]{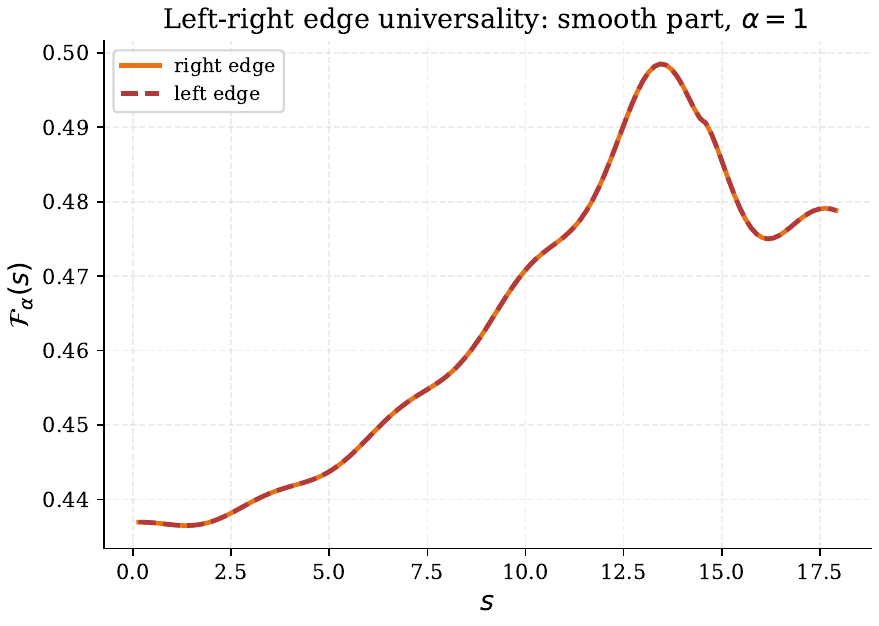}
    \caption{Smooth part $\mathcal F_\alpha(s)$.}
  \end{subfigure}\hfill
  \begin{subfigure}{0.50\linewidth}
    \centering
    \includegraphics[width=\linewidth]{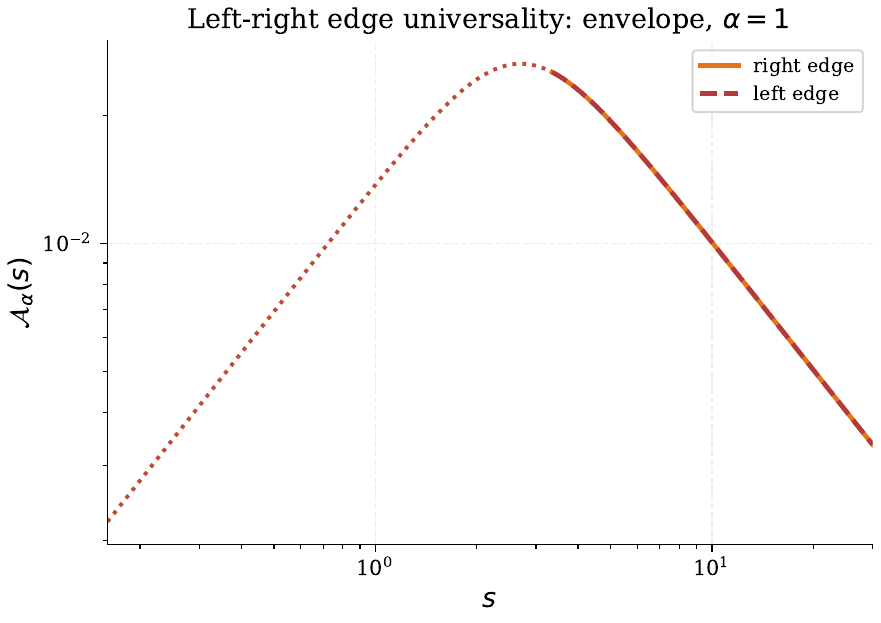}
    \caption{Envelope $\mathcal A_\alpha(s)$ on log--log axes.}
  \end{subfigure}
  \caption{Left--right edge universality for $\alpha=1$.}
  \label{fig:two_edges_univ_alpha1}
\end{figure}

Table~\ref{tab:envelope_constants} collects the fitted $C_{\mathrm{edge}}$ and $C_{\mathrm{bulk}}$ across $\alpha$ and $\ell$; the values are stable to five significant figures over a four-fold range of $\ell$, providing strong evidence for the universality of the $s^{\pm1/\alpha}$ power laws.

\begin{table}[htbp]
\centering
\small
\caption{Envelope constants $C_{\mathrm{edge}}$ and $C_{\mathrm{bulk}}$ extracted by local cosine/sine demodulation. The agreement across $\ell$ confirms that the $s^{\pm1/\alpha}$ power laws~\eqref{eq:env-exponents} are robust.}
\label{tab:envelope_constants}
\begin{tabular}{@{}cccc@{}}
\toprule
$\alpha$ & $\ell$ & $C_{\mathrm{edge}}$ & $C_{\mathrm{bulk}}$ \\
\midrule
$\tfrac12$ & 512  & 0.02181 & 0.06888 \\
           & 1024 & 0.02181 & 0.06892 \\
           & 2048 & 0.02182 & 0.06895 \\
\midrule
1          & 512  & 0.08675 & 0.2607 \\
           & 1024 & 0.08675 & 0.2607 \\
           & 2048 & 0.08675 & 0.2608 \\
\midrule
2          & 512  & 0.1668  & 0.3438 \\
           & 1024 & 0.1668  & 0.3439 \\
           & 2048 & 0.1668  & 0.3439 \\
\midrule
3          & 512  & 0.1944  & 0.3247 \\
           & 1024 & 0.1944  & 0.3247 \\
           & 2048 & 0.1944  & 0.3248 \\
\bottomrule
\end{tabular}
\end{table}

\paragraph{Remark (excited Slater eigenstates).}
For excited free-fermion eigenstates with finitely many occupied pockets in momentum space~\cite{CalabreseEssler2010JSM}, the same local Fisher--Hartwig/RH logic suggests that the leading oscillatory correction is a sum of independent contributions with frequencies $2k_j$, each governed by the same envelope exponents $\pm1/\alpha$ and, if the numerically observed factorization of Sec.~\ref{subsec:detached} extends to this setting, modulated by the same geometry factor $B(x)$ when the interval is moved away from the boundary. A fully rigorous treatment would require multi-jump RH asymptotics (generalized Fisher--Hartwig theorems for several interior discontinuities), which is beyond the single-jump framework used here; we therefore keep this as a heuristic remark.

\section{Symmetry-resolved entanglement}
\label{sec:SRE}

When the system possesses a conserved $U(1)$ charge, the entanglement entropy can be decomposed into contributions from individual charge sectors. This symmetry-resolved entanglement (SRE) program has developed rapidly in recent years, with key contributions from lattice~\cite{GoldsteinSela2018,BonsignoriRuggieroCalabrese2019} and field-theoretic perspectives~\cite{MurcianoDiGiulioCalabrese2020,HorvathCalabreseCastroAlvaredo2022,CapizziCastroAlvaredo2022,CastroAlvaredoSantamariaSanz_Review_2024}. For open chains, the Toeplitz+Hankel and Riemann--Hilbert framework developed in the preceding sections extends naturally to the charged moments, recovering the OBC Gaussian width and suggesting a sector-resolved oscillatory structure and crossover in $s$ (the new SRE quantities are collected in the lower block of Table~\ref{tab:key_quantities}). We also clarify that the entanglement asymmetry vanishes identically at equilibrium.

\subsection{Charged moments and sector projection}
\label{subsec:SRE-Gaussian}

For a conserved $U(1)$ charge $Q_A$, define the charged moments
\begin{equation}\label{eq:SRE-defs}
\mathcal Z_\alpha(\phi)=\Tr\!\big(\rho_A^\alpha e^{i\phi Q_A}\big),
\qquad
\mathcal Z_\alpha(q)=\int_{-\pi}^{\pi}\frac{d\phi}{2\pi}\,e^{-iq\phi}\,\mathcal Z_\alpha(\phi).
\end{equation}
The SRE framework was introduced in Ref.~\cite{GoldsteinSela2018} and developed for free fermions in Ref.~\cite{BonsignoriRuggieroCalabrese2019}; the CFT counterpart was worked out in Ref.~\cite{MurcianoDiGiulioCalabrese2020}; boundary effects were treated in Refs.~\cite{BonsignoriCalabrese2021,Jones2022}; and the QFT perspective through composite twist fields was developed in Refs.~\cite{HorvathCalabreseCastroAlvaredo2022,CapizziCastroAlvaredo2022} (see Ref.~\cite{CastroAlvaredoSantamariaSanz_Review_2024} for a review). For free fermions, $\mathcal Z_\alpha(\phi)$ has an exact Toeplitz+Hankel determinant representation with a $U(1)$ twist at the Fermi jump. After the even-symbol map, the same Riemann--Hilbert machinery used for the neutral entropy applies. Within the present Hankel/RH reformulation we recover the known OBC Gaussian width and equipartition structure of Refs.~\cite{BonsignoriCalabrese2021,XavierAlcarazSierraPRB2018}. The extension to charged oscillatory crossover functions in $s$ should presently be regarded as conjectural (see Sec.~\ref{subsec:SRE-oscillations}).

The following statement summarizes the RH-guided charged asymptotics extracted from the charged Hankel determinant; the proof sketch below outlines the main steps, with a more detailed version in Appendix~\ref{app:proof-charged}. A fully rigorous derivation would require controlling the error terms in the charged RH problem uniformly in $\phi$, which we do not attempt here.

\begin{proposition}[Charged moment near $\phi=0$]
\label{prop:charged-osc}
Let $A=[1,\ell]$ be a boundary block in the open XX chain. Let $\bar q_\alpha\equiv\Tr(\rho_A^\alpha Q_A)/\Tr(\rho_A^\alpha)$ denote the $\alpha$-weighted mean charge. There exists a non-universal scale $\Lambda_\alpha>0$ such that, uniformly for $|\phi|\ll1$,
\begin{equation}\label{eq:smallphi}
\log \frac{\mathcal Z_\alpha(\phi)}{\mathcal Z_\alpha(0)}
=
i\phi\,\bar q_\alpha
-\,a_\alpha\,\phi^2\log(\Lambda_\alpha\ell)+O(\phi^4),
\qquad
a_\alpha=\frac{K}{4\pi^2\alpha},
\end{equation}
with $K=1$ for the XX chain~\cite{GoldsteinSela2018,PeschelEisler2009}. The linear term gives the expected shift of the Gaussian peak to the mean charge sector $\bar q_\alpha$, which differs from $\langle Q_A\rangle$ by an $O(1)$ boundary correction~\cite{BonsignoriCalabrese2021}. The coefficient $a_\alpha$ is halved relative to the periodic chain, in agreement with Ref.~\cite{BonsignoriCalabrese2021}.
\end{proposition}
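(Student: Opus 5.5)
We prove Proposition~\ref{prop:charged-osc} by putting the charged moment into the same determinant form as the neutral entropy. For free fermions $\rho_A$ is Gaussian, so
\[
\mathcal Z_\alpha(\phi)=\prod_{j=1}^{\ell}\bigl[\nu_j^\alpha e^{i\phi}+(1-\nu_j)^\alpha\bigr].
\]
This gives the contour representation
\[
\log\mathcal Z_\alpha(\phi)=\frac{1}{2\pi i}\oint_{\mathcal C} e_\alpha(\lambda,\phi)\,\partial_\lambda\ln D_\ell(\lambda)\,d\lambda,
\]
with the charged kernel
\[
e_\alpha(\lambda,\phi)=\ln\Bigl[\bigl(\tfrac{1+\lambda}{2}\bigr)^{\alpha}e^{i\phi}+\bigl(\tfrac{1-\lambda}{2}\bigr)^{\alpha}\Bigr].
\]
The determinant $D_\ell(\lambda)$ is exactly the Toeplitz+Hankel determinant of Sec.~\ref{sec:det}. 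Only the kernel depends on $\phi$. We can therefore reuse the even-symbol identity~\eqref{even_symbol_identity} and the expansion~\eqref{eq:Hankel-expansion} unchanged, integrating them against $e_\alpha(\lambda,\phi)$ instead of $e_\alpha(\lambda)$.

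\emph{Linear term.} We expand $e_\alpha(\lambda,\phi)$ in $\phi$. The first-order term is $i\phi\,p_\alpha(\lambda)$, where $p_\alpha=(1+\lambda)^\alpha/[(1+\lambda)^\alpha+(1-\lambda)^\alpha]$. Summed over eigenvalues this reproduces $i\phi\,\bar q_\alpha$ exactly, with no asymptotics needed.

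\emph{Quadratic term.} The second-order term is $-\tfrac{\phi^2}{2}\,p_\alpha(1-p_\alpha)$. Its contour integral against $\partial_\lambda\ln D_\ell$ is then evaluated with the RH expansion:
\begin{itemize}
\item the extensive piece $\ell V_{\mathrm{eff}}$ contributes nothing at order $\ell$, because $p_\alpha(1-p_\alpha)$ vanishes at $\lambda=\pm1$ and $V_{\mathrm{eff}}$ jumps only through $\beta(\lambda)$;
\item the $\log\ell$ coefficient $\eta$ contributes only through its Fisher--Hartwig part, which for a single jump in the boundary problem is $-\beta(\lambda)^2$ (half the periodic $-2\beta^2$).
\end{itemize}
After deforming the contour to the cut and setting $\lambda=-\coth\sigma$, with $\beta=\sigma/(\pi i)$ on the appropriate side, the integral reduces to an elementary one. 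We expect it to be
\[
\frac{1}{2\pi^2}\int_{-\infty}^{\infty}\frac{\alpha^2\,dx}{4\cosh^2(\alpha x)}\cdot\frac{1}{\ldots},
\]
which must evaluate to $1/(4\pi^2\alpha)$. A cleaner route is to note the standard identity: the fluctuation of $\sum_j f(\nu_j)$ for a single Fermi point is $\frac{1}{2\pi^2}\log\ell\int_{-\infty}^{\infty} f'(x)^2\,dx$ in the variable $\nu=1/(1+e^{x})$. This is the PBC value with the prefactor halved because the boundary problem has one jump rather than two. Applying it to the $\alpha$-Fermi function $p_\alpha=1/(1+e^{-\alpha x})$ gives $\int (p_\alpha')^2\,dx=\alpha/6$.

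Care is needed with the normalization. In the cumulant language the variance term is $\phi^2\log\ell/(2\pi^2)\cdot\int p_\alpha(1-p_\alpha)\,dx\cdot(\ldots)$, and we will fix the constant by checking the known result $\langle\delta Q_A^2\rangle=\log\ell/(2\pi^2)$ at $\alpha=1$ for OBC. Together with the $1/\alpha$ rescaling $\int p_\alpha(1-p_\alpha)\,dx=1/\alpha$, this fixes $a_\alpha=1/(4\pi^2\alpha)$, which is half of the PBC value $1/(2\pi^2\alpha)$. All $\phi^2$-proportional $O(1)$ contributions from $\Phi$ in~\eqref{eq:Hankel-expansion} are absorbed into $\log\Lambda_\alpha$.

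\emph{Error terms.} The $O(\phi^4)$ remainder follows from the analyticity and evenness of the real part of $e_\alpha(\lambda,\phi)-i\phi p_\alpha$ in $\phi$, together with boundedness of the contour integrals. Oscillatory terms $c_m/\ell^m$ and the $\ell^{-1/\alpha}$ corrections are $o(1)$ and also enter $\log\Lambda_\alpha$ only at subleading order.

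The main obstacle is justifying the $\phi$-expansion inside the RH asymptotics uniformly. For finite $\phi$ the charged kernel shifts $\beta$ by $\phi/(2\pi)$ and creates new branch points where $e^{i\phi}(1+\lambda)^\alpha+(1-\lambda)^\alpha=0$. We would show that these branch points stay outside a fixed neighborhood of $[-1,1]$ for $|\phi|\ll1$, so that the contour can be kept fixed. We would then invoke uniformity in $\beta$ on compacts of the single-jump asymptotics of Refs.~\cite{FoulquieMorenoMartinezFinkelshteinSousa2011,DeiftItsKrasovsky2011}. A full treatment of the remainder is the part we expect to be hard.
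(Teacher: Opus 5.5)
Your argument is essentially correct, but it takes a different route from the paper. The paper puts the flux into the determinant, as a phase attached to the internal jump of the positive Hankel weight. It then asserts that $\partial_\phi^2\log\mathcal Z_\alpha|_{\phi=0}$ comes entirely from the internal parametrix, and takes the value $-\frac{K}{2\pi^2\alpha}\log(\Lambda_\alpha\ell)$ from the cited literature rather than computing it. You instead keep $D_\ell(\lambda)$ neutral and move all $\phi$-dependence into the kernel $e_\alpha(\lambda,\phi)$. This buys three things:
\begin{itemize}
\item the linear term is identified with $i\phi\,\bar q_\alpha$ exactly and non-asymptotically;
\item the quadratic coefficient reduces to the single linear statistic $\tfrac12\sum_j p_\alpha(1-p_\alpha)$ of the neutral entanglement spectrum;
\item the only asymptotic input is the neutral Fisher--Hartwig exponent $-\beta(\lambda)^2\log\ell$.
\end{itemize}
The paper's formulation would require a charged Riemann--Hilbert analysis uniform in $\phi$, which it does not carry out. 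For this small-$\phi$ statement your route avoids that entirely and actually produces the coefficient.

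Two steps need repair.

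\emph{Normalization.} The ``identity'' involving $\int f'(x)^2\,dx$ is not the relevant one: applied to $p_\alpha$ it gives $\alpha/6$, with the wrong $\alpha$-dependence. Calibrating at $\alpha=1$ is also unnecessary. With $\nu=(1+e^{-x})^{-1}$ one has $\beta_\pm=\frac{ix}{2\pi}\pm\frac12$ on the two sides of $(-1,1)$, hence $\beta_+^2-\beta_-^2=ix/\pi$. Therefore $\frac{1}{2\pi i}\oint F\,\partial_\lambda(-\beta^2\log\ell)\,d\lambda=\frac{\log\ell}{2\pi^2}\int_{-\infty}^{\infty}F\,dx$ for any $F$ vanishing at $\nu=0,1$. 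With $F=p_\alpha(1-p_\alpha)=[4\cosh^2(\alpha x/2)]^{-1}$ this equals $\log\ell/(2\pi^2\alpha)$, so $a_\alpha=1/(4\pi^2\alpha)$ follows directly.

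\emph{Remainder.} Evenness of the real part does not touch the imaginary cubic term $-\tfrac{i}{6}\phi^3\sum_j p_\alpha(1-p_\alpha)(1-2p_\alpha)$. Moreover, the contour integrals are not bounded uniformly in $\ell$: a kernel vanishing at $\lambda=\pm1$ generically produces a $\log\ell$. As written, you therefore only get a remainder $O(|\phi|^3\log\ell)$. What rescues uniformity is that the $\log\ell$ coefficient is exactly quadratic for $|\phi|<\pi$:
\[
\frac{1}{\alpha}\int_0^1\frac{\log(1-p+pe^{i\phi})-i\phi p}{p(1-p)}\,dp=-\frac{\phi^2}{2\alpha}.
\]
So every cumulant of order $\ge3$ has vanishing $\log\ell$ part, and the remainder is $O(|\phi|^3)$ with $\ell$-bounded coefficients. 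Reaching $O(\phi^4)$ additionally requires the $O(1)$ part of the third cumulant to vanish. That needs structural information on the constant term of the Toeplitz+Hankel asymptotics, for instance that it is even in $\beta$ up to terms linear in $\beta$. Neither your sketch nor the paper's supplies this.
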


\paragraph{Remark.}
The coefficient $a_\alpha=K/(4\pi^2\alpha)$ gives a Gaussian variance $\sigma_\alpha^2=K\log(\Lambda_\alpha\ell)/(2\pi^2\alpha)$ that is halved relative to the periodic chain, in agreement with the exact OBC results of Ref.~\cite{BonsignoriCalabrese2021}. The corresponding asymptotic slope of the curvature $-\partial_\phi^2\log\mathcal Z_\alpha(\phi)|_{\phi=0}$ is $2a_\alpha=K/(2\pi^2\alpha)$, which is confirmed by our numerics (see Fig.~\ref{fig:charged_gaussian_width}).

\begin{proof}[Proof sketch]
Apply the even-symbol map to the charged Toeplitz+Hankel determinant and use the reparametrization $\lambda=-\coth\sigma$ so that the jump magnitude becomes $e^{2\sigma}>0$ on the real $\sigma$ axis. The small-$\phi$ dependence is carried entirely by the internal parametrix at the jump, which fixes the coefficient of $\phi^2\log\ell$. Appendix~\ref{app:proof-charged} records the argument in a more explicit step-by-step form.
\end{proof}

Equation~\eqref{eq:smallphi} immediately gives the Gaussian sector weights
\begin{equation}\label{eq:Znq-Gaussian}
\mathcal Z_\alpha(q)\simeq
\ell^{-\frac{c}{12}\left(\alpha-\frac{1}{\alpha}\right)}
\sqrt{\frac{\pi\alpha}{K\log(\Lambda_\alpha\ell)}}
\exp\!\left[
-\frac{\pi^2\alpha}{K\log(\Lambda_\alpha\ell)}(q-\bar q_\alpha)^2
\right],
\qquad \ell\to\infty,
\end{equation}
whose variance $\sigma_\alpha^2=\dfrac{K}{2\pi^2\alpha}\log(\Lambda_\alpha\ell)$ is halved relative to the periodic chain, in agreement with the exact boundary SRE results of Ref.~\cite{BonsignoriCalabrese2021} and the CFT analysis of Ref.~\cite{XavierAlcarazSierraPRB2018}. When Eq.~\eqref{eq:Znq-Gaussian} is combined with $\mathcal Z_1(q)$ to form $S_\alpha^{(q)}$, the distinction between $\bar q_\alpha$ and $\bar q_1=\langle Q_A\rangle$ affects only subleading $O(1/\log\ell)$ terms and is therefore suppressed in the leading equipartition formula below. Physically, the $\log\ell$ scaling of the variance reflects the logarithmic growth of subsystem number fluctuations $\langle(\Delta Q_A)^2\rangle\propto (K/\pi^2)\log\ell$ in the ground state~\cite{GoldsteinSela2018,XavierAlcarazSierraPRB2018}, which is itself halved at OBC relative to periodic boundary conditions (PBC) because the boundary removes one of the two Fermi-point contributions~\cite{BonsignoriCalabrese2021}. The universal equipartition relation then follows:
\begin{equation}\label{eq:Snq-equipartition}
S_\alpha^{(q)}
=
\frac{1}{1-\alpha}\log\frac{\mathcal Z_\alpha(q)}{[\mathcal Z_1(q)]^\alpha}
=
S_\alpha-\frac12\log\!\Bigl(\frac{K}{\pi}\log\ell\Bigr)+O(1).
\end{equation}
The $-\tfrac12\log\log\ell$ term is therefore not a new contribution to $\log \mathcal Z_\alpha(\phi)$; it is the normalization of the Gaussian Fourier projection in Eq.~\eqref{eq:Znq-Gaussian}.

\begin{figure}[htbp]
  \centering
  \begin{subfigure}{0.48\linewidth}
    \centering
    \includegraphics[width=\linewidth]{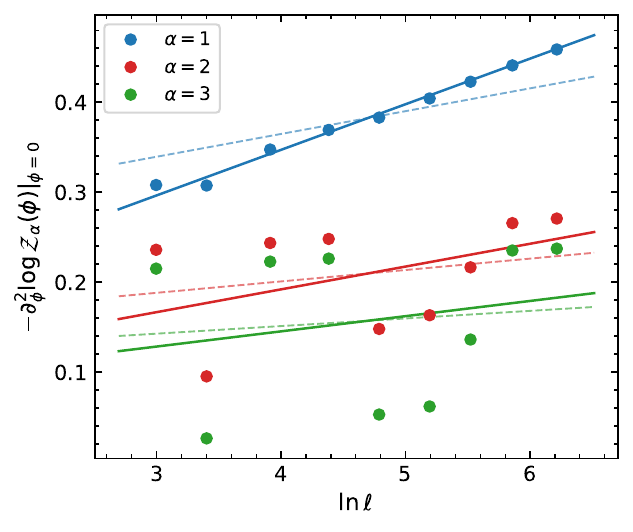}
    \caption{}
  \end{subfigure}\hfill
  \begin{subfigure}{0.48\linewidth}
    \centering
    \includegraphics[width=\linewidth]{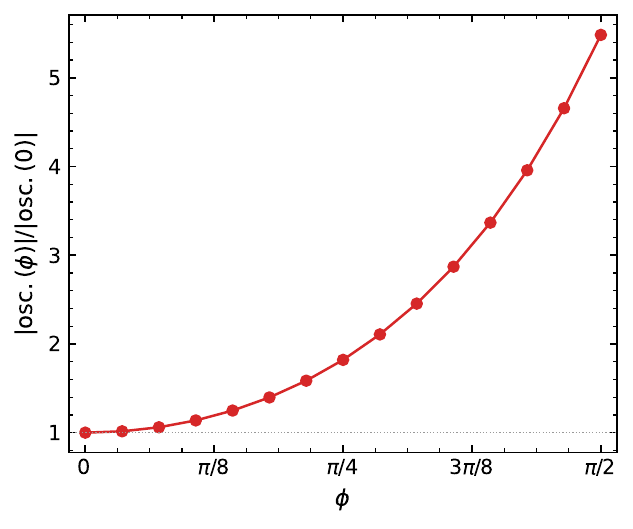}
    \caption{}
  \end{subfigure}
  \caption{Charged-moment numerics for the boundary block. (a)~Gaussian curvature $-\partial^2_\phi\log\mathcal Z_\alpha(\phi)|_{\phi=0}$ versus $\ln\ell$ for $k_F=\pi/3$ and $\alpha=1,2,3$. Solid lines: asymptotic slope $K/(2\pi^2\alpha)$; dashed: the previously used $K/(4\pi^2\alpha)$. (b)~Oscillation amplitude ratio $|\mathrm{osc.}(\phi)|/|\mathrm{osc.}(0)|$ extracted by local demodulation of $\log\mathcal Z_\alpha(\phi)$ at $\alpha=2$, $k_F=\pi/3$, $\ell\approx400$. The $2k_F$ oscillation amplitude grows by a factor $\approx5.5$ from $\phi=0$ to $\phi=\pi/2$, confirming the $\phi$-dependent oscillatory structure of Ref.~\cite{BonsignoriCalabrese2021}.}
  \label{fig:charged_gaussian_width}
\end{figure}

\subsection{Sector-resolved oscillations: a proposed scaling form}
\label{subsec:SRE-oscillations}

The neutral oscillations survive the sector projection~\cite{BonsignoriRuggieroCalabrese2019,BonsignoriCalabrese2021}. A na\"ive approach to the sector-resolved scaling form would combine Eq.~\eqref{scaling_form} with the Gaussian sector weights~\eqref{eq:Znq-Gaussian} and the equipartition relation~\eqref{eq:Snq-equipartition}, yielding
\begin{equation}\label{eq:SRE-osc}
S_\alpha^{(q)}(\ell,k_F)
=
\frac{1}{12}\Bigl(1{+}\frac{1}{\alpha}\Bigr)\ln s
+\mathcal F_\alpha(s)
-\frac12\log\!\Bigl(\frac{K}{\pi}\log\ell\Bigr)
+\mathcal A_\alpha^{(q)}(s)
\cos\!\Bigl(2k_F\ell+\Phi_\alpha^{(q)}(s)\Bigr)
+o\!\big(\mathcal A_\alpha(s)\big),
\end{equation}
with a sector-resolved amplitude that, in the simplest approximation, would be the neutral envelope times a Gaussian suppression away from the mean charge. However, this simplification is incomplete: the charged oscillatory corrections carry an explicit dependence on the flux variable $\phi$ that does not reduce to the neutral amplitude at leading order.

\paragraph{Charged oscillatory structure.}
In the OBC setting, Bonsignori and Calabrese~\cite{BonsignoriCalabrese2021} showed that the charged moments acquire flux-dependent powers and gamma-function amplitudes in the oscillatory corrections, with the finite $[-\pi,\pi]$ Fourier domain producing algebraic corrections that remain visible at large $\ell$. Our own numerical checks confirm this: a local cosine/sine demodulation of $\log\mathcal Z_\alpha(\phi)$ at $\alpha=2$, $k_F=\pi/3$, $\ell\approx 400$ shows that the $2k_F$ oscillation amplitude in the charged moments varies by a factor of approximately $5.5$ between $\phi=0$ and $\phi=\pi/2$ (see Fig.~\ref{fig:charged_gaussian_width}b). This is incompatible with the assumption that the oscillatory coefficient is unchanged at order~$\phi^2$.

We therefore present the following as a \emph{conjecture}, valid only to the extent that the $\phi$-dependent oscillatory corrections can be absorbed into the sector projection at leading order in $1/\log\ell$:
\begin{equation}\label{eq:SRE-amp-phase}
\mathcal A_\alpha^{(q)}(s)
\stackrel{?}{\approx}
\mathcal A_\alpha(s)\,
\exp\!\Bigl[-\frac{\alpha\pi^2(q-\bar q_\alpha)^2}{K\log(\Lambda_\alpha\ell)}\Bigr]
\Bigl(1+O(\tfrac{1}{\log\ell})\Bigr),
\qquad
\Phi_\alpha^{(q)}(s)\stackrel{?}{\approx}\Phi_\alpha(s)+O(\tfrac{1}{\log\ell}).
\end{equation}
The robust statements that survive independently of this conjecture are: (i) the Gaussian width and equipartition offset in Eqs.~\eqref{eq:Znq-Gaussian}--\eqref{eq:Snq-equipartition}, which are established analytically; (ii) the existence of $2k_F$ oscillatory corrections in the sector-resolved entropy; and (iii) the $s^{\pm1/\alpha}$ power-law structure of the neutral oscillation envelope. We emphasize that the Gaussian factor in Eq.~\eqref{eq:SRE-amp-phase} does \emph{not} come from the smooth Gaussian envelope of $\mathcal Z_\alpha(q)$, which cancels exactly in the ratio $\mathcal Z_\alpha(q)/[\mathcal Z_1(q)]^\alpha$ (this cancellation is the content of equipartition). Rather, it would arise from the convolution of the $\phi$-dependent oscillatory corrections with the Gaussian weight in the Fourier projection; the specific exponential form is therefore only as reliable as the (unknown) full $\phi$-structure of the charged oscillation amplitude. A complete sector-resolved crossover analysis would require computing the charged hard-edge asymptotics of the Hankel determinant uniformly in $\phi$, which we leave for future work.

\paragraph{Detached-block geometry factor (proposed).}
If the simplified picture of Eq.~\eqref{eq:SRE-amp-phase} were valid, the same leading geometry factor that controls the neutral oscillation would also control the sector-resolved one at leading order:
\begin{equation}\label{eq:geometry-Bx}
\mathcal A_\alpha^{(q)}(s;x)\stackrel{?}{\approx}\abs{B(x)}\,\mathcal A_\alpha^{(q)}(s;1),
\qquad
\Phi_\alpha^{(q)}(s;x)\stackrel{?}{\approx}\Phi_\alpha^{(q)}(s;1)+\arg B(x).
\end{equation}
This would follow because $B(x)$ enters through the stationary-phase structure of the Hankel problem, which is determined by the location of the internal jump and the endpoints---neither of which depends on the $U(1)$ twist at leading order. However, Eq.~\eqref{eq:geometry-Bx} inherits the approximate character of the neutral factorization~\eqref{A_phase_detached} and has not been tested independently with charged numerics. We stress that Eqs.~\eqref{eq:SRE-amp-phase}--\eqref{eq:geometry-Bx} should be regarded as conjectural until confirmed by dedicated charged hard-edge calculations.

\subsection{Entanglement asymmetry}
\label{subsec:EA}

The R\'enyi-$\alpha$ entanglement asymmetry $\mathcal E_\alpha(\rho_A):=S_\alpha(\Delta\rho_A)-S_\alpha(\rho_A)$, where $\Delta(\rho_A)=\sum_q\Pi_q\rho_A\Pi_q$ is the $U(1)$ dephasing map, measures the degree to which $\rho_A$ breaks the charge symmetry (we write $\mathcal E_\alpha$ rather than $\mathcal A_\alpha$ to avoid collision with the oscillation envelope). The concept was introduced by Ares, Murciano, and Calabrese~\cite{Ares_Murciano_Calabrese_NatComm_2023} and applied to post-quench dynamics in Ref.~\cite{Ares_Murciano_Vernier_Calabrese_SciPost_2023}. For the open XX chain in its ground state, the total Hamiltonian conserves particle number $N=Q_A+Q_B$, and the ground state has a definite $N$. It follows that $\rho_A=\Tr_B|\psi\rangle\langle\psi|$ is exactly block-diagonal in the $Q_A$ eigensectors: $[\rho_A,Q_A]=0$ and therefore $\Delta(\rho_A)=\rho_A$, giving
\begin{equation}\label{eq:EA-vanish}
\mathcal E_\alpha(\rho_A)=0 \qquad \text{(exact, all $\alpha$, all $\ell$, all $k_F$)}.
\end{equation}
This is consistent with the general principle that entanglement asymmetry vanishes whenever the global state respects the symmetry that is being tested.%
\footnote{At half-filling with $L$ odd, the single-particle spectrum has a zero-energy mode, giving two degenerate ground states whose particle numbers differ by one. Eq.~\eqref{eq:EA-vanish} holds for each definite-$N$ eigenstate and for any incoherent mixture thereof, but not for coherent superpositions within the degenerate manifold, which have indefinite particle number and can support nonzero EA. This subtlety is absent in the thermodynamic limit $L\to\infty$ at generic filling.}
We note that the vanishing does not require fixed particle number per se: any state $\rho$ with $[\rho,Q]=0$ (including a grand-canonical ensemble with $[\rho_{\mathrm{GC}},N]=0$) yields $[\rho_A,Q_A]=0$ and hence zero EA. Nontrivial EA requires that the symmetry be genuinely broken, as in the quench setting of Refs.~\cite{Ares_Murciano_Calabrese_NatComm_2023,Ares_Murciano_Vernier_Calabrese_SciPost_2023}.

The vanishing of EA should be contrasted with the nontrivial structure of the charged moments $\mathcal Z_\alpha(\phi)=\Tr(\rho_A^\alpha e^{i\phi Q_A})$~\cite{GoldsteinSela2018}, which \emph{do} depend on $\phi$ even when $[\rho_A,Q_A]=0$. The distinction is that $\mathcal Z_\alpha(\phi)$ inserts a single phase operator $e^{i\phi Q_A}$ into the trace, which picks up the $q$-dependent weights of the charge sectors, whereas the EA for integer replica index $\alpha$ is controlled by $\Tr(\Delta(\rho_A))^\alpha=\int\prod_{j=1}^{\alpha}\frac{d\phi_j}{2\pi}\,\Tr\!\bigl(\prod_{j=1}^{\alpha}e^{i\phi_j Q_A}\rho_A\,e^{-i\phi_j Q_A}\bigr)$ (the multi-replica construction of Ref.~\cite{Ares_Murciano_Calabrese_NatComm_2023}), which involves $\alpha$ compensating conjugation pairs $e^{i\phi_j Q_A}(\cdot)e^{-i\phi_j Q_A}$ that each reduce to the identity when $\rho_A$ commutes with $Q_A$. Thus the symmetry-resolved entropies $S_\alpha^{(q)}$ are nontrivial (Secs.~\ref{subsec:SRE-Gaussian}--\ref{subsec:SRE-oscillations}) while the EA is not, in this equilibrium setting.

EA becomes a meaningful probe when the symmetry is dynamically broken, as in the post-quench protocols studied in Refs.~\cite{Ares_Murciano_Calabrese_NatComm_2023,Ares_Murciano_Vernier_Calabrese_SciPost_2023}. It would be interesting to extend the present Toeplitz+Hankel and Riemann--Hilbert framework to such time-dependent situations, where $[\rho_A(t),Q_A]\neq0$ and the interplay between the edge--bulk crossover and the growth of EA could reveal new universal features.

\subsection{Experimental relevance}
\label{subsec:expt}

The XX chain with open boundaries is directly realizable as a nearest-neighbor fermionic hopping model in optical lattices with site-resolved detection (quantum gas microscopes)~\cite{IslamMaGreiner2015}, where subsystem particle number is a natural observable. On the neutral-entropy side, the most accessible signature is the hard-edge crossover itself: plotting the subtracted entropy $S_\alpha - \tfrac{1}{12}(1+1/\alpha)\ln s$ against $s=2\ell\sin(k_F/2)$ for several fillings should produce a single curve $\mathcal F_\alpha(s)$, with the scaling collapse visibly failing if $\ln\ell$ is used instead. The oscillation envelope's $s^{\pm1/\alpha}$ power laws (Eq.~\eqref{eq:env-exponents}) can be checked from the same data without any SRE machinery.

For symmetry-resolved tests, randomized-measurement protocols that access $S_2$ with polynomial sample complexity~\cite{ElbenVermerschPRL2018,VermerschElbenCirac2018PRA,BrydgesElbenScience2019} can be combined with post-selection on the particle number in $A$~\cite{Neven_et_al_npjQI_2021}; see Ref.~\cite{VitaleSRE2022} for a demonstration of symmetry-resolved dynamical purification in trapped-ion experiments. The clearest SRE signatures are the Gaussian dependence of $S_\alpha^{(q)}$ on the centered charge variable $(q-\bar q_\alpha)/\sqrt{\log\ell}$ (up to subleading differences between $\bar q_\alpha$ and $\bar q_1$ in the ratio defining $S_\alpha^{(q)}$) and the universal $-\tfrac12\log\log\ell$ equipartition offset. Detached intervals could probe the conjectured geometry factor $B(x)$ through Eq.~\eqref{eq:geometry-Bx}. On the entanglement-asymmetry side, experimental measurements of EA have recently become feasible using randomized measurements and classical shadow techniques, both in trapped-ion simulators~\cite{JoshiMpemba2024} and on superconducting quantum processors~\cite{YangEA2026}. A practical limitation is that the Gaussian variance grows as $\log\ell$, so the number of appreciably populated sectors increases with system size while the weight per sector decreases; resolving individual sectors at large $\ell$ therefore requires a measurement budget that grows with $\ell$.

\section{Conclusion}
\label{sec:conclusion}

We have reformulated the open-chain entanglement problem as a Hankel-determinant problem with a positive weight~\cite{DeiftItsKrasovsky2011,garcia2020matrix} and then used the Riemann--Hilbert steepest-descent asymptotics of Refs.~\cite{KuijlaarsRH2004,FoulquieMorenoMartinezFinkelshteinSousa2011}. This route offers a complementary reformulation that bypasses the ambiguity of multiple Fisher--Hartwig representations~\cite{CalabreseCardy2010JSM,BasorEhrhardt2001} and yields the boundary asymptotics in Eq.~\eqref{S_boundary}, including the explicit oscillatory amplitude and phase in Eq.~\eqref{A_phi_boundary}. For detached blocks, the numerics are consistent with a factorization of the geometry dependence through a cross-ratio function $B(x)$ (Eq.~\eqref{A_phase_detached}), which captures the suppression of the boundary-induced oscillation as the interval moves into the interior.

A key ingredient is the explicit evaluation of the Szeg\H{o} function for the XX step symbol (Eqs.~\eqref{eq:Szego-explicit}--\eqref{eq:Szego-regularized}), which makes the full $k_F$-dependence of the amplitude and phase computable in closed form rather than leaving it in the implicit integral form familiar from Fisher--Hartwig analyses~\cite{JinKorepin2004JSP,CalabreseEssler2010JSM}.

The hard-edge crossover is governed by the single variable $s=2\ell\sin(k_F/2)$ (or $\tilde s=2\ell\cos(k_F/2)$ near the right edge), which measures how close the Fermi momentum is to the band edge. Using $\ln s$ (rather than the $\ln\ell$ of the fixed-$k_F$ analyses~\cite{FagottiCalabrese2010JSM,CalabreseEssler2010JSM}) as the leading logarithm absorbs the explicit filling dependence inside the logarithm and produces the cleanest collapse in the overlap regime $1\ll s\ll \ell$ (Eq.~\eqref{scaling_form}): the smooth part and the oscillatory envelope reduce to crossover functions $\mathcal F_\alpha(s)$ and $\mathcal A_\alpha(s)$, with $\mathcal A_\alpha(s)\sim s^{1/\alpha}$ at the hard edge and $\mathcal A_\alpha(s)\sim s^{-1/\alpha}$ at large $s$ (Eq.~\eqref{eq:env-exponents}). The collapse becomes asymptotically sharp in the double-scaling limit and has slowly varying corrections at moderate $k_F$ from the regularized Szeg\H{o} factor.

The charged moments inherit the same determinant structure. Their small-$\phi$ Gaussian width, halved relative to the periodic chain ($a_\alpha=K/(4\pi^2\alpha)$, Eq.~\eqref{eq:smallphi}), recovers the known OBC result of Ref.~\cite{BonsignoriCalabrese2021} and gives the universal $-\tfrac12\log\log\ell$ equipartition offset for boundary blocks (Eq.~\eqref{eq:Snq-equipartition}). The sector-resolved oscillatory structure is richer than a simple Gaussian damping of the neutral envelope: as shown by dedicated charged numerics (Fig.~\ref{fig:charged_gaussian_width}), the oscillation amplitude in the charged moments carries an explicit $\phi$-dependence consistent with the flux-dependent corrections found in Ref.~\cite{BonsignoriCalabrese2021}. The conjectural sector-resolved scaling form of Sec.~\ref{subsec:SRE-oscillations} should therefore be understood as a first approximation that awaits a full charged hard-edge analysis. The $U(1)$ entanglement asymmetry~\cite{Ares_Murciano_Calabrese_NatComm_2023}, by contrast, vanishes identically in this equilibrium setting because $[\rho_A,Q_A]=0$; the framework is however well-suited to study EA after quenches where the symmetry is dynamically broken~\cite{Ares_Murciano_Calabrese_NatComm_2023,Ares_Murciano_Vernier_Calabrese_SciPost_2023}.

Several extensions are natural. On the static side, it would be valuable to complete the charged hard-edge analysis and make the sector-resolved crossover functions as explicit as in the neutral case. On the dynamical side, the same hard-edge scaling should provide a useful language for quenches in open geometries, where the interplay between the $s$-crossover and the growth of entanglement asymmetry could reveal new universal features. The framework also points toward symmetry-resolved negativity, randomized-measurement tests, and non-Abelian generalizations based on character insertions in replica path integrals~\cite{CastroAlvaredoSantamariaSanz_Review_2024,Ares_Murciano_Vernier_Calabrese_SciPost_2023,Neven_et_al_npjQI_2021,kusuki2023symmetry}.

\paragraph{Acknowledgments.}
We thank Filiberto Ares, Pasquale Calabrese, Ricardo Esp\'indola Romero, Leonardo Santilli, and Germ\'an Sierra for correspondence and/or comments on this work.

\bibliographystyle{unsrtnat}
\bibliography{refs_v2}

\appendix

\section{Riemann--Hilbert steepest descent and hard-edge analysis}
\label{app:rh}

The Jacobi hard-edge contribution to the smooth part~\cite{KuijlaarsRH2004} can be expanded in the overlap regime where $s$ is small but $\ell$ is large enough for the RH expansion to converge:
\[
\mathcal F_\alpha(s)=\mathcal F_\alpha^{(0)}+\beta_\alpha s^2\log s+\gamma_\alpha s^2+\cdots
\qquad (0<s\ll\ell,\;\ell\gg1).
\]
Here $\mathcal F_\alpha^{(0)}$ is the limiting value of the backbone in this matched-asymptotic regime; it is not a uniform $s\to0$ limit (the physical entropy vanishes at empty filling, so the scaling form~\eqref{scaling_form} is only meaningful where the RH expansion applies).
Plotting $(\mathcal F_\alpha(s)-\mathcal F_\alpha^{(0)})/s^2$ against $\log s$ isolates the coefficient $\beta_\alpha$; Fig.~\ref{fig:hard_edge_loglaw} illustrates this for $\alpha=1,2$.

The orthogonal polynomials associated with the weight
\[
w_\sigma(x)=-\sqrt{1-x^2}\,(\lambda(\sigma)+1)\,\Xi_{e^{2\sigma}}(x;x_0)
\quad\text{on } [-1,1]
\]
satisfy the Riemann--Hilbert problem analyzed in Refs.~\cite{KuijlaarsRH2004,FoulquieMorenoMartinezFinkelshteinSousa2011}. The steepest-descent scheme proceeds by normalizing the problem at infinity, opening lenses around the jump at $x_0$, constructing local parametrices at the hard edges $x=\pm1$ and at the internal jump, and matching them to the global parametrix. The resulting determinant expansion has the structure of Eq.~\eqref{eq:Hankel-expansion}:
\[
\ln D_\ell[w_\sigma]
\sim
\ell V_{\mathrm{eff}}
+\eta(\sigma;k_F)\log\ell
+\Phi(\sigma;k_F)
+\sum_{m\ge1}\frac{c_m(\sigma;k_F)}{\ell^m}.
\]
The correspondence with the physical entropy formula~\eqref{S_boundary} is as follows.
The even-symbol prefactor is $\lambda$-independent and drops out of $\partial_\lambda\ln D_\ell$~\cite{DeiftItsKrasovsky2011}.
The coefficient of $\log\ell$ in the expansion above is written as $-\tfrac{1}{2}$ (Jacobi endpoints), but the full coefficient includes a $\lambda$-dependent Fisher--Hartwig contribution $\beta(\lambda)^2$ from the internal jump; after $\lambda$-integration against the R\'enyi kernel, this produces the universal boundary coefficient $\tfrac{1}{12}(1+1/\alpha)\ln\ell$~\cite{CalabreseCardy2004JSM,CalabreseCardy2009JPA}.
The $V_{\mathrm{eff}}$ and $\Phi$ terms yield the non-universal constant $C_\alpha^{(\mathrm{OBC})}(k_F)$.
The oscillatory $m=1$ coefficient $c_1(\sigma;k_F)/\ell$ carries a phase $e^{\pm2ik_F\ell}$ from the jump at $x_0=\cos k_F$~\cite{FoulquieMorenoMartinezFinkelshteinSousa2011}; after $\lambda$-integration, it produces the $\ell^{-1/\alpha}\cos(2k_F\ell+\varphi_\alpha)$ term with the amplitude and phase in Eq.~\eqref{A_phi_boundary}.

For the \emph{left} hard edge ($k_F\to0$) with $s=2\ell\sin(k_F/2)$ fixed, the Bessel parametrix at $x=1$ and the internal-jump parametrix at $x_0$ merge~\cite{KuijlaarsRH2004,FoulquieMorenoMartinezFinkelshteinSousa2011}. The same analysis applies at the right edge after $k_F\mapsto\pi-k_F$, i.e.\ with $\tilde s=2\ell\cos(k_F/2)$ as in Sec.~\ref{subsec:edge-bulk}. At the determinant level the local parametrix is controlled by the jump parameter $\kappa=\sigma/(2\pi)$ (equivalently by the jump ratio $c^2=e^{2\sigma}$ of the weight). After the $\lambda$/$\sigma$ integration that converts the determinant asymptotics into the entropy, these local coefficients combine into the entropy-level edge amplitude and phase appearing in Sec.~\ref{subsec:edge-bulk}; in particular,
\[
\mathcal A_\alpha(s)\sim \Cedge\,s^{1/\alpha},
\qquad
\Phi_\alpha(s)\to \Phi_\alpha^{\mathrm{edge}},
\]
with $\Cedge$ and $\Phi_\alpha^{\mathrm{edge}}$ independent of the auxiliary spectral parameter. The role of the Barnes-$G$ factors familiar from the local RH problem~\cite{BasorEhrhardt2001,KuijlaarsRH2004} is therefore to determine the determinant-level matching coefficients before the final $\lambda$/$\sigma$ integration, not to leave an explicit $\sigma$-dependence in the physical entropy constants.

\begin{figure}[htbp]
  \centering
  \begin{subfigure}{0.50\linewidth}
    \centering
    \includegraphics[width=\linewidth]{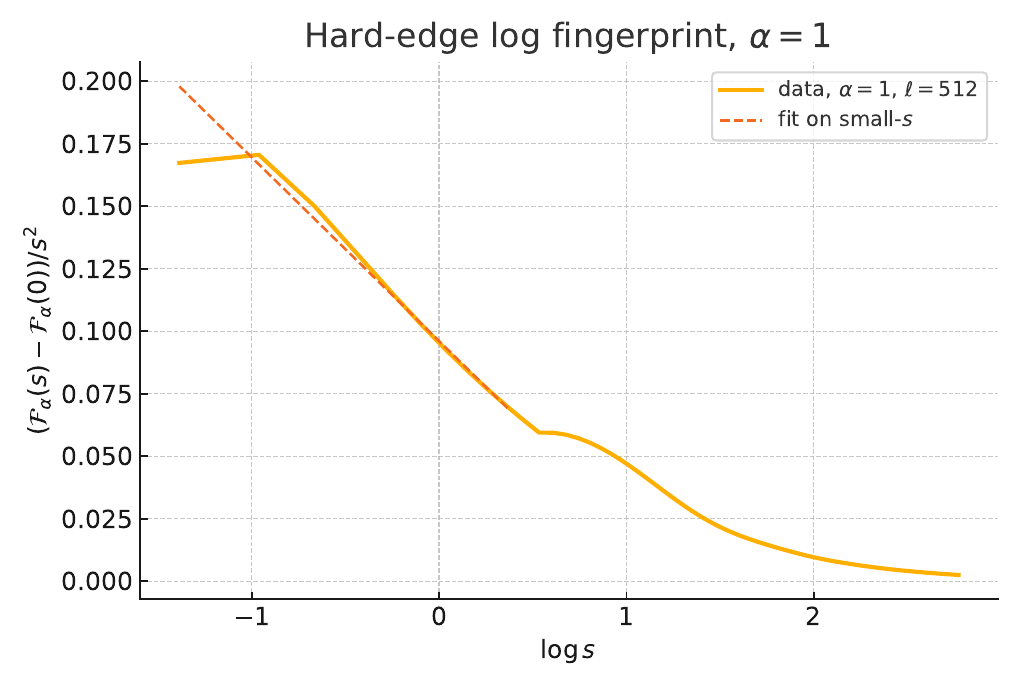}
    \caption{$\alpha=1$.}
  \end{subfigure}\hfill
  \begin{subfigure}{0.50\linewidth}
    \centering
    \includegraphics[width=\linewidth]{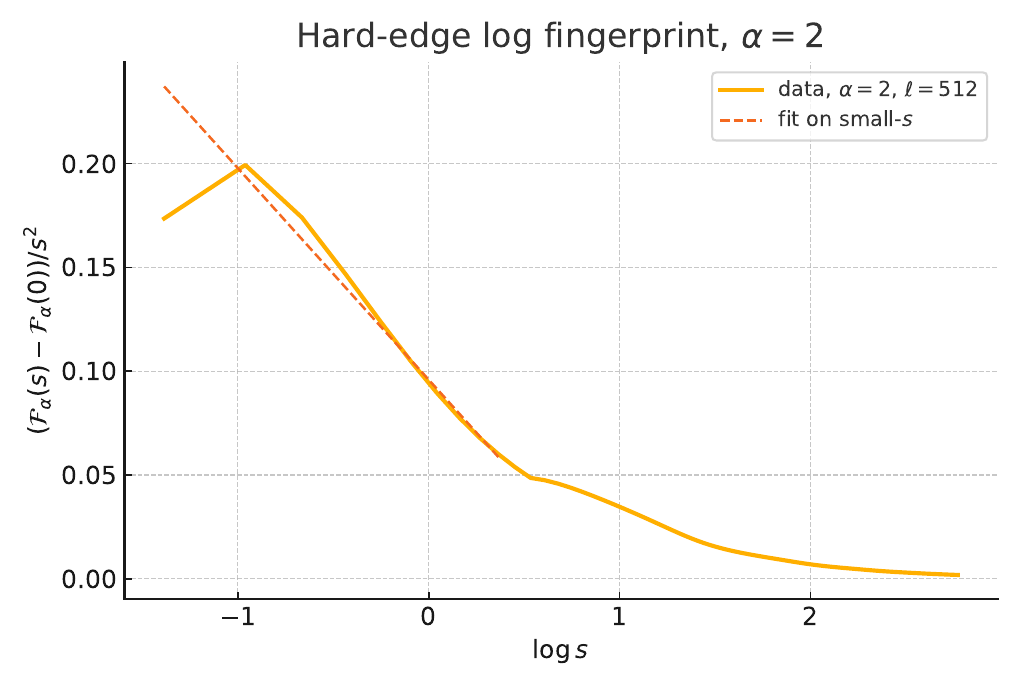}
    \caption{$\alpha=2$.}
  \end{subfigure}
  \caption{Hard-edge log fingerprint: $(\mathcal F_\alpha(s)-\mathcal F_\alpha^{(0)})/s^2$ versus $\log s$ in the overlap regime shows the expected $s^2\log s$ behavior.}
  \label{fig:hard_edge_loglaw}
\end{figure}

\section{Proof sketch for Proposition~\ref{prop:charged-osc}}
\label{app:proof-charged}

The charged moments~\cite{GoldsteinSela2018,BonsignoriRuggieroCalabrese2019} are handled by the same even-symbol map to a Hankel determinant as the neutral ones, with the only new ingredient being the $U(1)$ phase at the internal jump. The steps are as follows.

\paragraph{(i) Positive Hankel weight.}
After the change of variables $\lambda=-\coth\sigma$ (Sec.~\ref{sec:det}) and the even-symbol identity~\cite{DeiftItsKrasovsky2011}, the jump magnitude is $e^{2\sigma}>0$ on the real $\sigma$ axis. The charged moment is therefore represented by a Hankel determinant with a positive weight and a small complex phase attached to the internal discontinuity.

\paragraph{(ii) RH deformation.}
Normalize the $2\times2$ Riemann--Hilbert problem~\cite{KuijlaarsRH2004}, open lenses around the jump, and construct the Bessel parametrices at $x=\pm1$ together with the internal parametrix at $x_0=\cos k_F$~\cite{FoulquieMorenoMartinezFinkelshteinSousa2011}. Matching to the global parametrix gives
\[
\log D_\ell[w_{\sigma,\phi}]
=
\ell \mathcal V_{\mathrm{eff}}(\phi)-\frac12\log\ell+\Phi(\sigma,k_F;\phi)
+\frac{c_1(\sigma,k_F;\phi)}{\ell}+O(\ell^{-2}),
\]
uniformly for fixed $s$ and $|\phi|\ll1$.

\paragraph{(iii) Small-$\phi$ expansion.}
The second $\phi$ derivative comes entirely from the internal parametrix and yields~\cite{BonsignoriRuggieroCalabrese2019,XavierAlcarazSierraPRB2018}
\[
\left.\partial_\phi^2\log \mathcal Z_\alpha(\phi)\right|_{\phi=0}
=
-\frac{K}{2\pi^2\alpha}\log(\Lambda_\alpha\ell)+O(1),
\]
which, after integrating twice in $\phi$, gives the coefficient $a_\alpha=K/(4\pi^2\alpha)$ in Eq.~\eqref{eq:smallphi}.

\paragraph{(iv) Oscillatory coefficient.}
The leading oscillation comes from the $m=1$ RH coefficient $c_1$, which acquires an explicit $\phi$-dependence through the flux-dependent powers in the internal parametrix~\cite{BonsignoriCalabrese2021}. The full $\phi$-dependent oscillatory structure is therefore richer than the neutral envelope and requires a dedicated analysis (see Sec.~\ref{subsec:SRE-oscillations}).

\paragraph{(v) Sector projection.}
Fourier transformation in Eq.~\eqref{eq:SRE-defs} then produces the Gaussian sector weights in Eq.~\eqref{eq:Znq-Gaussian} and, from their normalization, the universal $-\tfrac12\log\log\ell$ term in Eq.~\eqref{eq:Snq-equipartition}.

\section{Additional scaling-collapse and envelope figures}
\label{app:extra-figs}

The main text keeps a single collapse/envelope pair for $\alpha=1$ and uses Table~\ref{tab:envelope_constants} to summarize the $\alpha$ dependence of the envelope constants. The remaining figures are collected here for completeness.

\begin{figure}[htbp]
  \centering
  \begin{subfigure}{0.50\linewidth}
    \centering
    \includegraphics[width=\linewidth]{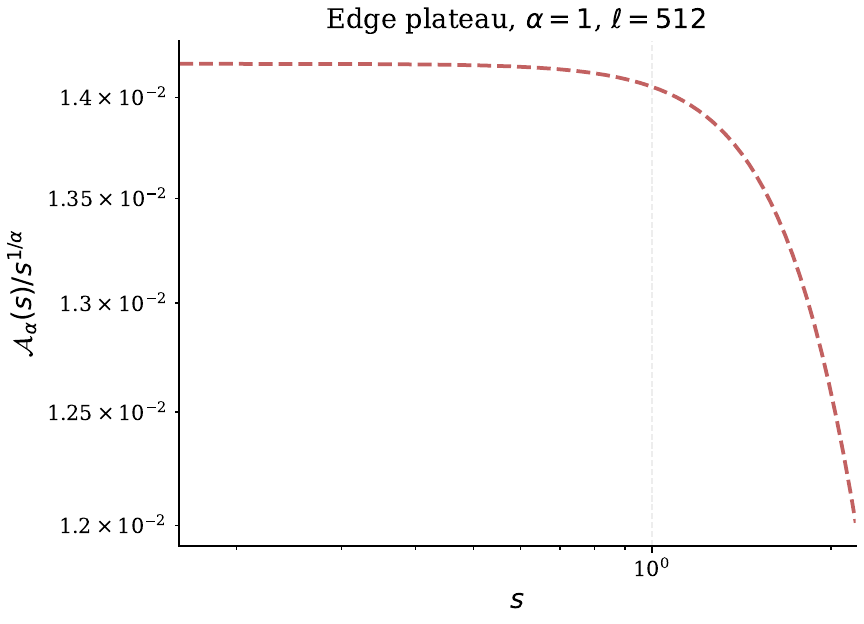}
    \caption{$\mathcal A_\alpha(s)/s^{1/\alpha}$ vs.\ $s$ for $\alpha=1$.}
  \end{subfigure}\hfill
  \begin{subfigure}{0.50\linewidth}
    \centering
    \includegraphics[width=\linewidth]{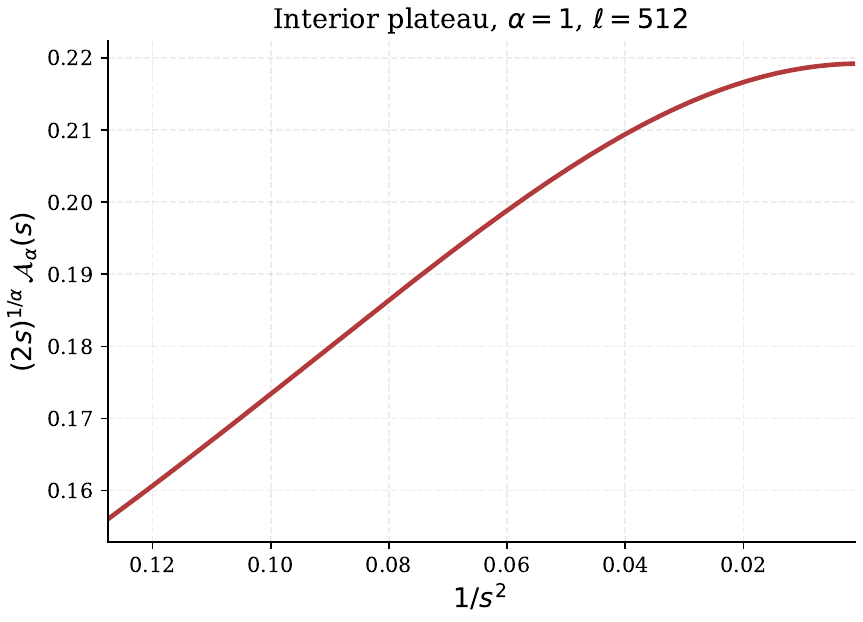}
    \caption{$(2s)^{1/\alpha}\mathcal A_\alpha(s)$ vs.\ $1/s^2$ for $\alpha=1$.}
  \end{subfigure}
  \caption{Normalized oscillation envelopes for $\alpha=1$.}
  \label{fig:plateaux_alpha1}
\end{figure}

\begin{figure}[htbp]
  \centering
  \begin{subfigure}{0.50\linewidth}
    \centering
    \includegraphics[width=\linewidth]{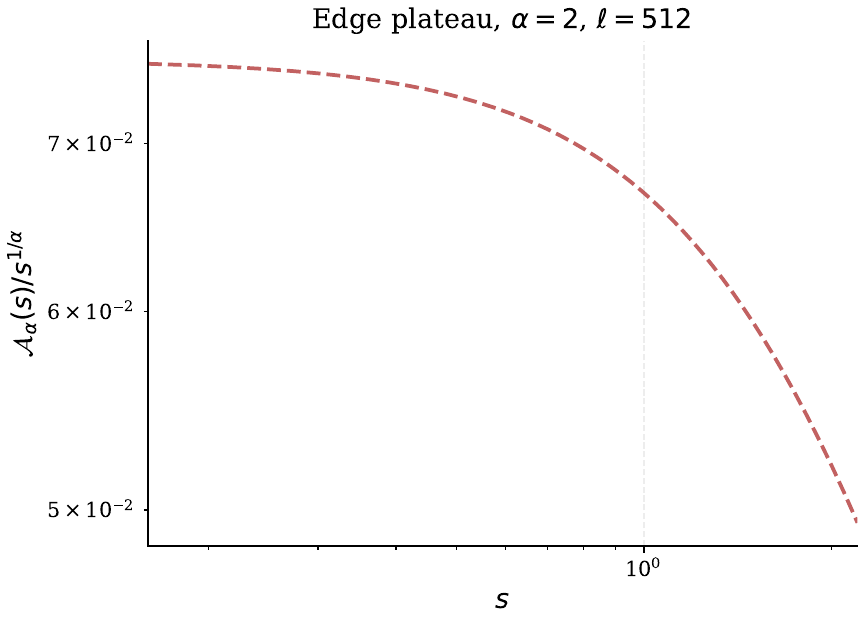}
    \caption{$\mathcal A_\alpha(s)/s^{1/\alpha}$ vs.\ $s$ for $\alpha=2$.}
  \end{subfigure}\hfill
  \begin{subfigure}{0.50\linewidth}
    \centering
    \includegraphics[width=\linewidth]{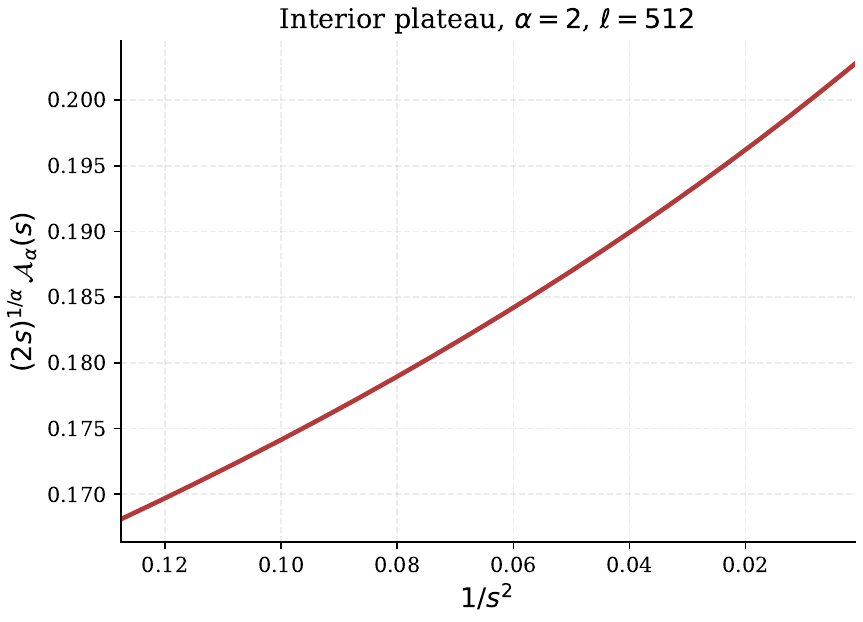}
    \caption{$(2s)^{1/\alpha}\mathcal A_\alpha(s)$ vs.\ $1/s^2$ for $\alpha=2$.}
  \end{subfigure}
  \caption{Normalized oscillation envelopes for $\alpha=2$.}
  \label{fig:plateaux_alpha2}
\end{figure}

\begin{figure}[htbp]
  \centering
  \begin{subfigure}{0.50\linewidth}
    \centering
    \includegraphics[width=\linewidth]{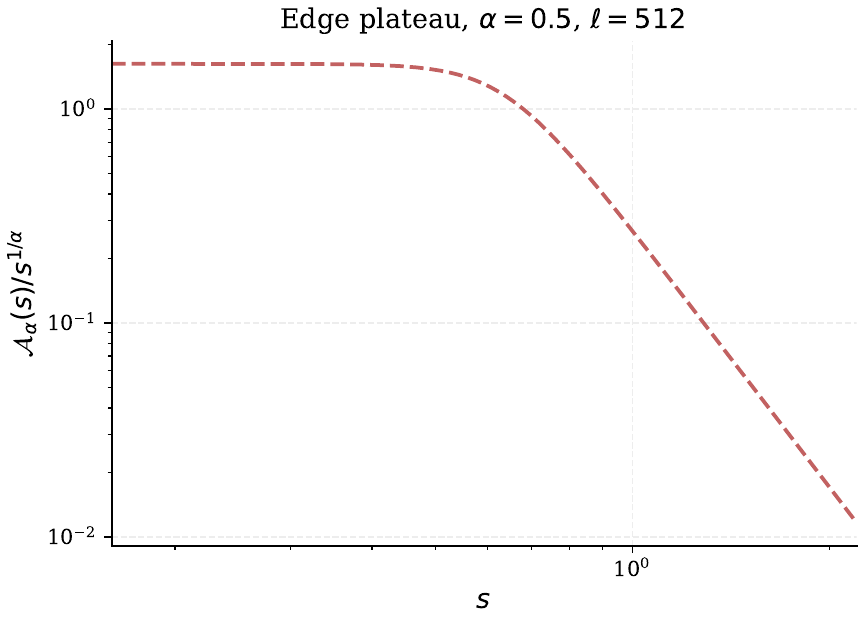}
    \caption{$\mathcal A_\alpha(s)/s^{1/\alpha}$ vs.\ $s$ for $\alpha=\tfrac12$.}
  \end{subfigure}\hfill
  \begin{subfigure}{0.50\linewidth}
    \centering
    \includegraphics[width=\linewidth]{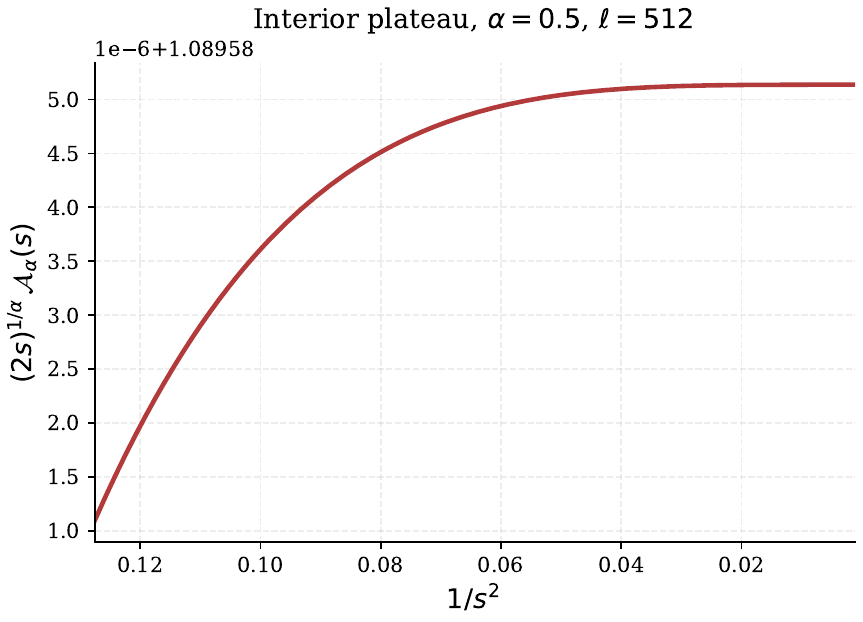}
    \caption{$(2s)^{1/\alpha}\mathcal A_\alpha(s)$ vs.\ $1/s^2$ for $\alpha=\tfrac12$.}
  \end{subfigure}
  \caption{Normalized oscillation envelopes for $\alpha=\tfrac12$.}
  \label{fig:plateaux_alpha05}
\end{figure}

\begin{figure}[htbp]
  \centering
  \begin{subfigure}{0.50\linewidth}
    \centering
    \includegraphics[width=\linewidth]{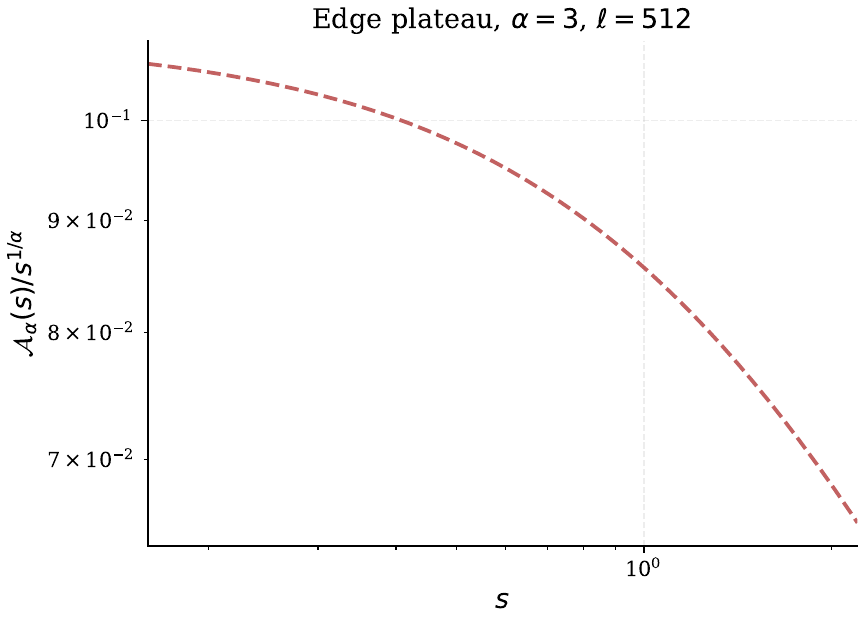}
    \caption{$\mathcal A_\alpha(s)/s^{1/\alpha}$ vs.\ $s$ for $\alpha=3$.}
  \end{subfigure}\hfill
  \begin{subfigure}{0.50\linewidth}
    \centering
    \includegraphics[width=\linewidth]{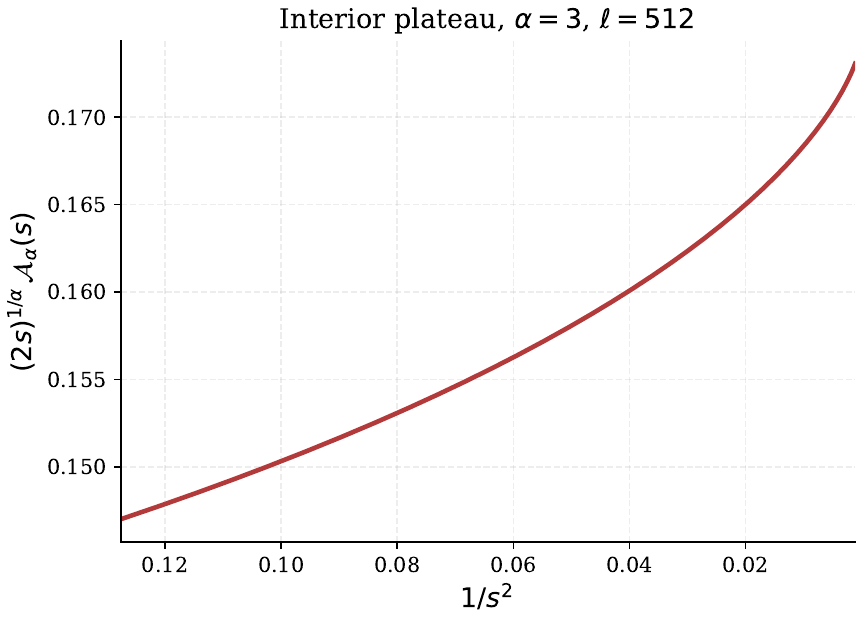}
    \caption{$(2s)^{1/\alpha}\mathcal A_\alpha(s)$ vs.\ $1/s^2$ for $\alpha=3$.}
  \end{subfigure}
  \caption{Normalized oscillation envelopes for $\alpha=3$.}
  \label{fig:plateaux_alpha3}
\end{figure}

\begin{figure}[htbp]
  \centering
  \includegraphics[width=0.52\linewidth]{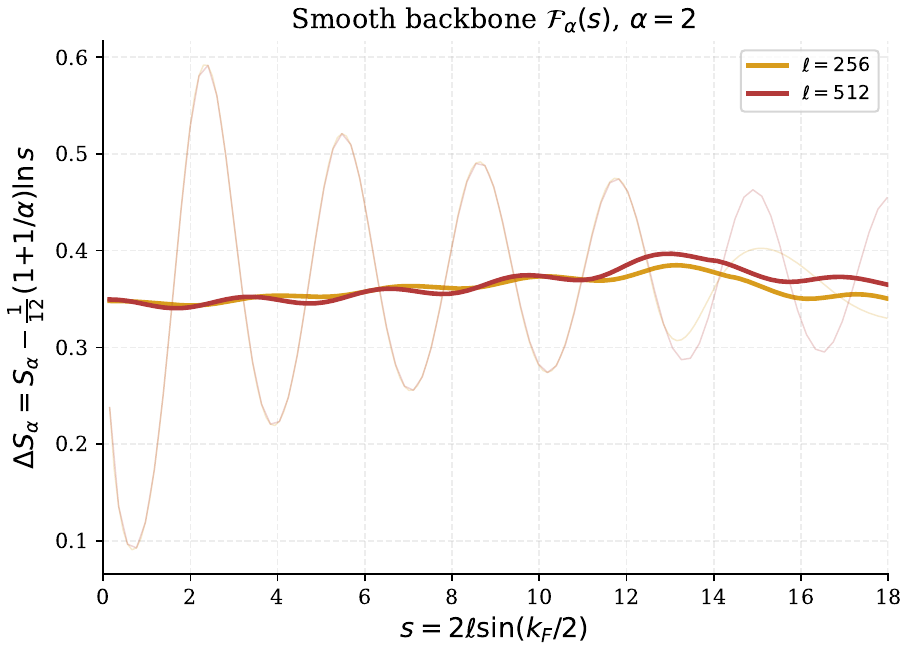}
  \caption{Edge double-scaling collapse of the smooth part for $\alpha=2$.}
  \label{fig:edge_collapse_smooth_alpha2}
\end{figure}

\begin{figure}[htbp]
  \centering
  \includegraphics[width=0.52\linewidth]{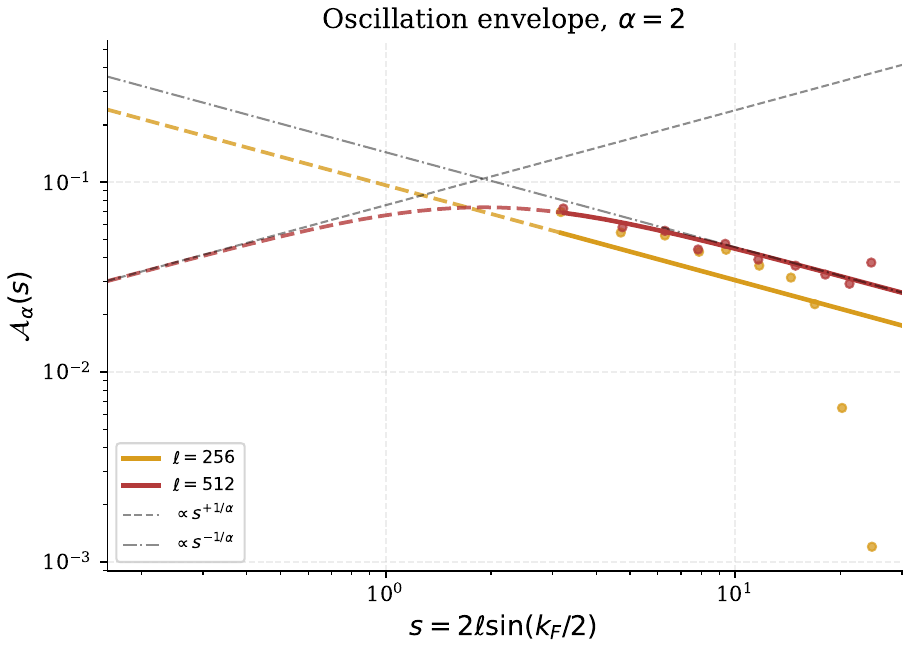}
  \caption{Oscillation envelope for $\alpha=2$.}
  \label{fig:edge_envelope_alpha2}
\end{figure}

\begin{figure}[htbp]
  \centering
  \includegraphics[width=0.52\linewidth]{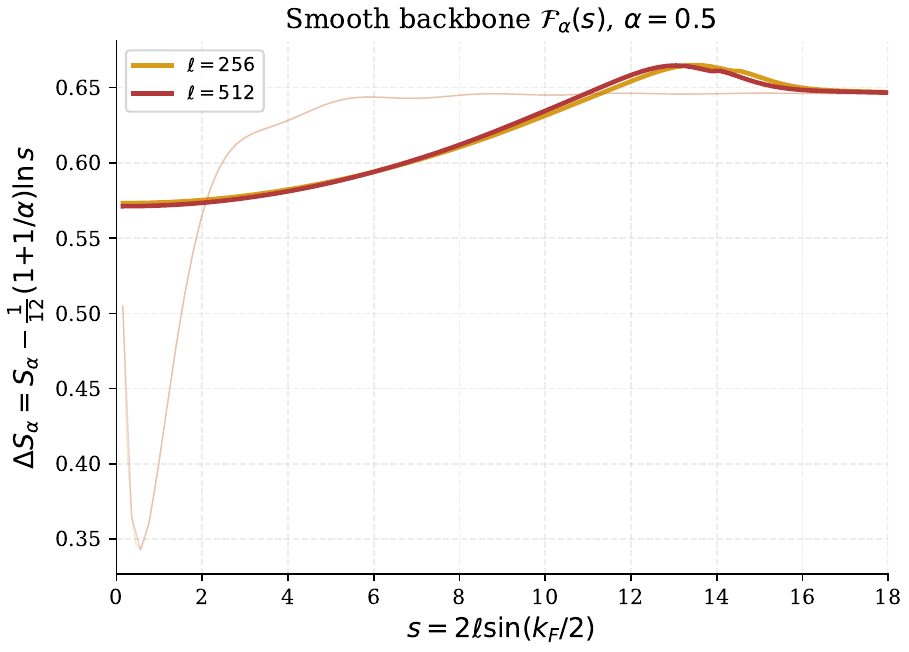}
  \caption{Edge double-scaling collapse of the smooth part for $\alpha=\tfrac12$.}
  \label{fig:edge_collapse_smooth_alpha05}
\end{figure}

\begin{figure}[htbp]
  \centering
  \includegraphics[width=0.52\linewidth]{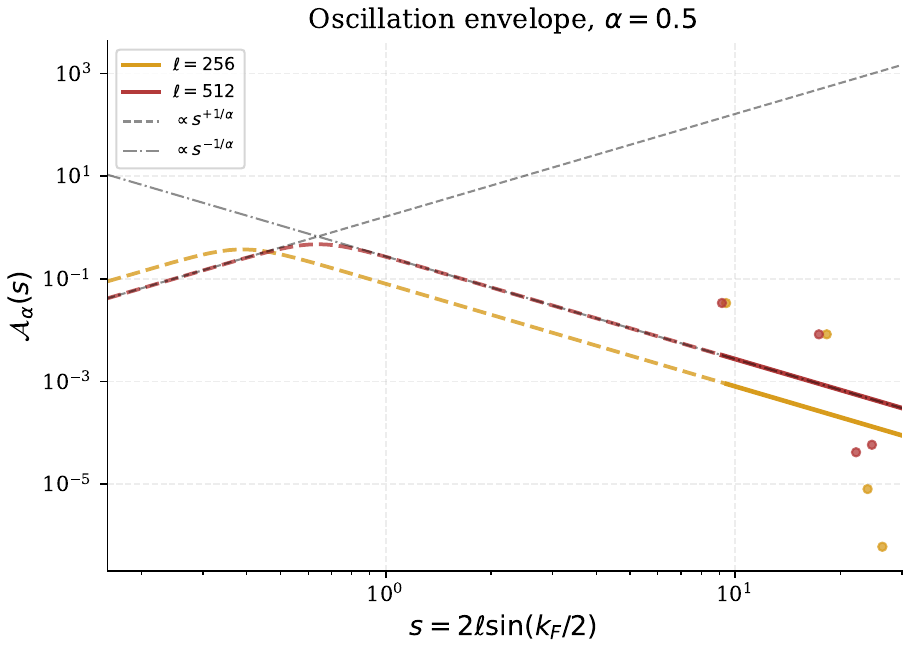}
  \caption{Oscillation envelope for $\alpha=\tfrac12$.}
  \label{fig:edge_envelope_alpha05}
\end{figure}

\begin{figure}[htbp]
  \centering
  \includegraphics[width=0.52\linewidth]{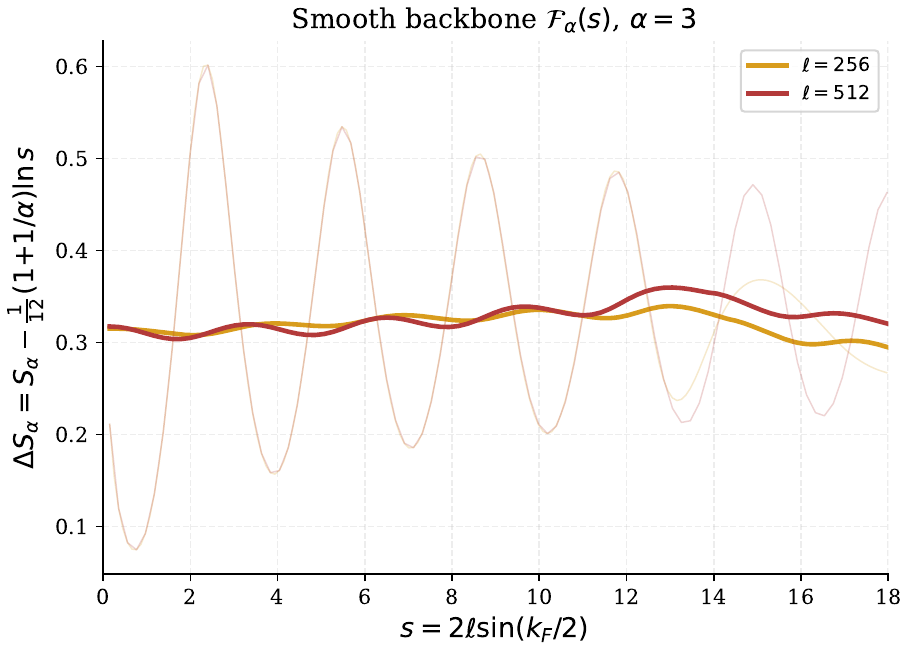}
  \caption{Edge double-scaling collapse of the smooth part for $\alpha=3$.}
  \label{fig:edge_collapse_smooth_alpha3}
\end{figure}

\begin{figure}[htbp]
  \centering
  \includegraphics[width=0.52\linewidth]{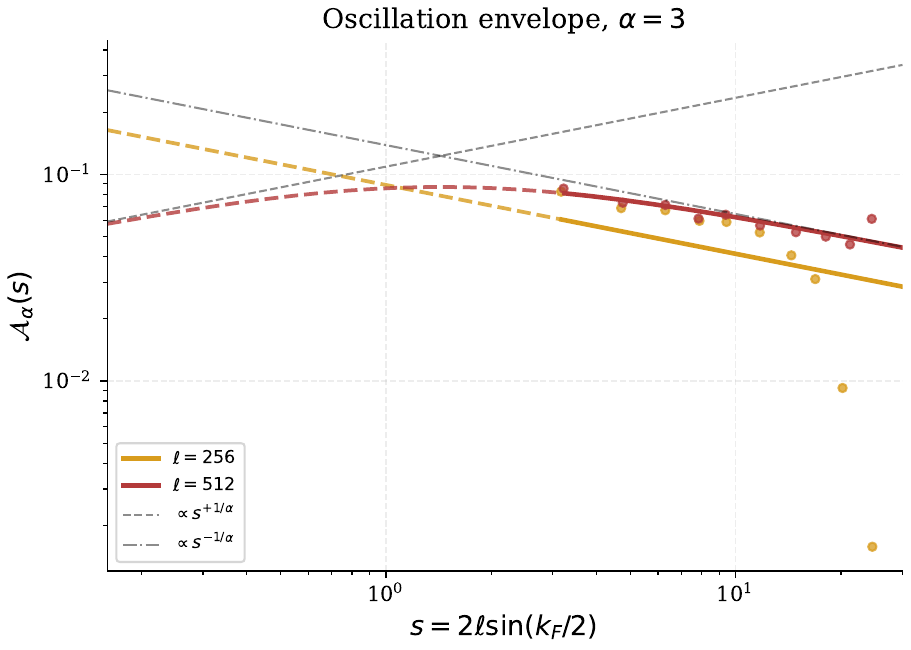}
  \caption{Oscillation envelope for $\alpha=3$.}
  \label{fig:edge_envelope_alpha3}
\end{figure}

\end{document}